\newtheorem{theorem}{Theorem}[section]
\newtheorem{corollary}[theorem]{Corollary}
\newtheorem{lemma}[theorem]{Lemma}
\theoremstyle{definition}
\newtheorem{remark}[theorem]{Remark}
\newcommand{\tr}{\textup{Tr}}
\newcommand{\<}{\left<}
\renewcommand{\>}{\right>}
\newcommand{\idm}{\mathbf{1}}
\newcommand{\cP}{{\cal P}}
\newcommand{\cL}{{\cal L}}
\newcommand{\bb}[1]{{\mathbf #1}}
\newcommand{\He}{\mathrm{He}}
\newcommand{\Gmm}[1]{\Gamma\left( #1 \right)}
\begin{document}

\title{Condition numbers for real eigenvalues in the real Elliptic Gaussian ensemble}
%\author[1]{Yan V. Fyodorov\footnote{yan.fyodorov@kcl.ac.uk}}
%\author[2]{Wojciech Tarnowski\footnote{wojciech.tarnowski@student.uj.edu.pl}}
%\affil[1]{Department of Mathematics, King's College London, Strand, London, WC2R 2LS, UK}
%\affil[2]{Faculty of Physics, Astronomy and Applied Computer Science, Jagiellonian University in Kraków, Łojasiewicza 11, 30-348 Kraków}
\author{Yan V. Fyodorov$^1$\footnote{yan.fyodorov@kcl.ac.uk} \  and Wojciech Tarnowski$^2$\footnote{wojciech.tarnowski@doctoral.uj.edu.pl}
\\[0.5ex] {\small $^1$Department of Mathematics, King's College London}
{\small Strand, London, WC2R 2LS, UK}
\\
 {\small $^2$Institute of Theoretical Physics,} \\
\small{ Jagiellonian University, S. \L ojasiewicza 11, PL 30-348 Krak\'ow, Poland}
}

\date{\today}
\maketitle
\begin{abstract}
We study the distribution of the eigenvalue condition numbers $\kappa_i=\sqrt{ (\bb{l}_i^* \bb{l}_i)(\bb{r}_i^* \bb{r}_i)}$
associated with real eigenvalues $\lambda_i$ of partially asymmetric $N\times N$ random matrices from the real Elliptic Gaussian  ensemble.
The large values of $\kappa_i$ signal  the non-orthogonality of the (bi-orthogonal) set of left $\bb{l}_i$ and right $\bb{r}_i$ eigenvectors and enhanced sensitivity of the associated eigenvalues against perturbations of the matrix entries. We derive the general finite $N$ expression for the joint density function(JDF) ${\cal P}_N(z,t)$ of $t=\kappa_i^2-1$ and $\lambda_i$ taking value $z$, and investigate its several scaling regimes in the limit $N\to \infty$. When the degree of asymmetry is fixed as $N\to \infty$, the number of real eigenvalues is $O(\sqrt{N})$, and in the bulk of the real spectrum $t_i=O(N)$, while on approaching the spectral edges the non-orthogonality is weaker: $t_i=O(\sqrt{N})$. In both cases the corresponding JDFs, after appropriate rescaling, coincide with those found in the earlier studied case of fully asymmetric (Ginibre) matrices. A different regime of weak asymmetry arises when a finite fraction of $N$ eigenvalues remain real as $N\to \infty$. In such a regime eigenvectors are weakly non-orthogonal, $t=O(1)$, and we derive the associated JDF, finding  that the characteristic tail ${\cal P}(z,t)\sim t^{-2}$  survives for arbitrary weak asymmetry. As such, it is the most robust feature of the condition number density for real eigenvalues of asymmetric matrices.
\end{abstract}

\begin{small}
\textbf{Keywords:} bi-orthogonal eigenvectors, eigenvalue condition numbers, weak non-Hermiticity

\textit{Mathematics Subject Classification:}  60B20  	Random matrices 
\end{small}

\section{Introduction}
A (real-valued) square matrix $X$ is asymmetric if it is different from its transpose $X^T$, and non-normal if $XX^T\neq X^TX$. Generically, asymmetric matrices are non-normal, and their eigenvalues are much more sensitive to the perturbations of the matrix entries than for their symmetric (hence selfadjoint and normal) counterparts. It is well-known that non-normality may raise serious issues when calculating the spectra of such matrices numerically: keeping a fixed precision of calculations might not be sufficient, as some eigenvalues can be `ill-conditioned'.

  To be more specific, we assume that $X$ can be diagonalized (which for random matrices happens with probability one). Then to each eigenvalue $\lambda_i$, real or complex (in the latter case being always accompanied by its complex conjugate partner $\overline{\lambda}_i$) correspond two sets of eigenvectors, left $\bb{l}_i$ and right $\bb{r}_i$ which can always  be chosen to be {\it bi-}orthogonal: $\bb{l}_i^* \bb{r}_{j}=\delta_{ij}$, where $\bb{l}_i^*:=\overline{\bb{l}_i^T}$ stands for Hermitian conjugation.
   The corresponding eigenproblems are  $X\bb{r}_i=\lambda_i\bb{r}_i$ and $X^T \bb{l}_i=\lambda_i \bb{l}_i$.  Consider now a matrix $X'=X+\epsilon P$, where the second term represents an error one makes by storing the matrix entries with a finite precision, with $\epsilon>0$ controlling the magnitude of the error and $P$ reflecting the matrix structure of the perturbation. In the first order perturbation theory in parameter $\epsilon$ eigenvalues are shifted by
\begin{equation}
|\delta \lambda_i|=\epsilon |\bb{l}_i^* P \bb{r}_i|\leq \epsilon ||P|| \sqrt{ (\bb{l}_i^* \bb{l}_i)(\bb{r}_i^* \bb{r}_i)}.
\end{equation}
The latter quantity, $\kappa_i=\sqrt{ (\bb{l}_i^* \bb{l}_i)(\bb{r}_i^* \bb{r}_i)}$, shows that the sensitivity of eigenvalues is essentially controlled by the non-orthogonality of the corresponding left and right eigenvectors. Correspondingly, in the numerical analysis context $\kappa_i$ is called the eigenvalue condition number of the eigenvalue $\lambda_i$ \cite{W1965, TE2005}. Note also that the Cauchy-Schwarz inequality implies $\kappa\geq 1$, with $\kappa=1$ only  when $X$ is normal.

It is natural to ask how well-conditioned  a `typical' asymmetric matrix is. This question can be most meaningfully answered in the context of Random Matrix Theory (RMT), by defining `typical' as randomly chosen according to a probability measure specified by a particular choice of the ensemble. The simplest yet nontrivial choice is to assume that all entries are mean-zero, independent, identically distributed (i.i.d.) Gaussian numbers. This defines the standard real Ginibre ensemble which we denote $Gin_1$. Note that the question is equally interesting for matrices with entries that are complex rather than real, defining the complex Ginibre ensemble, which we denote $Gin_2$.
Note that for such an ensemble eigenvalues $\lambda_i$ are purely complex with probability one.

  It is the latter ensemble for which the study of the eigenvalue condition numbers has been initiated two decades ago by Chalker and Mehlig~\cite{CM1998,MC2000}.  More precisely, Chalker and Mehlig introduced a matrix of inner products $O_{ij}=(\bb{l}_i^*\bb{l}_j)(\bb{r}_j^* \bb{r}_i)$, which they called  `eigenvector overlaps'. The diagonal elements of that matrix are simply the squared eigenvalue condition numbers. They further associated with the diagonal elements of the overlap matrix the following single-point correlation function:
\begin{equation}
O_1(z)=\left\langle\frac{1}{N}\sum_{k=1}^{N}O_{kk}\delta(z-\lambda_k) \right\rangle_{Gin_2}. \label{eq:O1}
\end{equation}
where the angular brackets stand for the expectation with respect to the probability measure associated with the complex Ginibre ensemble,
and $\delta(z-\lambda_k)$ is the Dirac delta mass at the eigenvalue $\lambda_k$, so that the empirical density of eigenvalues in the complex plane $z$ reads $\rho(z)=\frac{1}{N}\sum_{k=1}^{N}\delta(z-\lambda_k).$

 Such $O_1(z)$ gives the conditional expectations of (squared) $\kappa$ as $\mathbb{E}(\kappa_i^2|z=\lambda_i)=\frac{O_1(z)}{\langle \rho(z)\rangle }$, where $\langle \rho(z)\rangle$ is the mean spectral density around $z$~\cite{BNST2017}.
It turned out that in the bulk of the spectrum of the complex Ginibre ensemble the magnitude of a typical diagonal overlap $O_{ii}$ grows linearly with the size of the matrix $N$, so one needs to consider a rescaled object $\tilde{O_1}(z)=\frac{1}{N}O_1(z)$ to obtain a non-trivial limit. In their influential papers~\cite{CM1998,MC2000} Chalker and Mehlig used the `formal' perturbation theory expansion to evaluate asymptotically, for $N\gg 1$, both the diagonal overlap $O_1(z)$ and its more general off-diagonal counterpart
$$
O_2(z_1,z_2)=\left\langle  \frac{1}{N}\sum_{k\ne l}^N
\mathcal{O}_{kl}\delta(z_1-\lambda_k)\delta(z_2-\lambda_l)\right\rangle_{Gin_2}\,.
$$
 The first mathematically rigorous verification of the Chalker and Mehlig result for the diagonal overlap has been done in \cite{WaltersStarr2015}. Remarkably, the function $O_1(z)$  can be efficiently studied within the formalism of free probability~\cite{Jetal1999} which  recently allowed to extend the Chalker-Mehlig formulas to a broad class of invariant ensembles beyond the Gaussian case~\cite{BNST2017,NT2018}. $O_1(z)$ is also known for a finite size of the complex Ginibre matrix~\cite{CM1998,MC2000} and products of small Ginibre matrices~\cite{BSV2017}.   It has been recently shown that for complex Ginibre matrices the one- and two-point functions conditioned on an arbitrary number of eigenvalues are related to determinantal point processes~\cite{ATTZ2019}.
 Various features characterising the rich properties of  eigenvectors of nonnormal random matrices have been also 
 studied in \cite{BZ2018} and \cite{CR2018}. 

Here it is necessary to mention that the interest in statistical properties of the overlap matrix $O_{kl}$ and related objects extends much beyond the issues of eigenvalue stability under perturbation and is driven by numerous applications in theoretical and mathematical physics. In particular, non-orthogonality governs transient dynamics in complex systems~\cite{G2017,GNetal2018,TNV2019} (see also~\cite{MBO2018,EKR2019}),  analysis of spectral outliers in non-selfadjoint matrices~\cite{MetzNeri}, and,  last but not least, the description of the Dyson Brownian motion for non-normal matrices ~\cite{BGNTW2014,BD2018,GW2018}. Another steady source of interest in
the statistics of eigenvector overlaps is due to its role in chaotic wave scattering. In that context $O_1(z)$ and $O_2(z_1,z_2)$ have been studied for a few special models different from Ginibre (both theoretically \cite{FS2000,FM2002,FS2012} and very recently experimentally \cite{DavyGenack18,DavyGenack19}) and  in the associated models of random lasing~\cite{PSB2000,SFPB2000}.
In the scattering context all eigenvalues are necessarily complex, and the lasing threshold is associated with the eigenvalue with the smallest imaginary part. For that special eigenvalue even the distribution of the overlap $O_{ii}$ has been studied~\cite{SFPB2000}.

  In fact,  Chalker and Mehlig not only analysed $O_1(z)$, but also put forward a conjecture on the tail for the density  ${\cal P}\left(\mathcal{O}_{ii}\right)$  of the distribution
  of diagonal overlaps $\mathcal{O}_{ii}$. Namely, based on the exactly solvable case of $2\times 2$ matrices and numerical simulations
 for the complex Ginibre case they predicted that for large overlaps the density will exhibit a tail ${\cal P}\left(\mathcal{O}_{ii}\right)\sim \mathcal{O}^{-3}_{ii}$, making all the positive integer moments beyond  $O_1(z)$ divergent. This conjecture
    has been settled only recently  with two different methods, by Bourgade and Dubach in ~\cite{BD2018} (where some information about
$\mathcal{O}_{l\ne k}$ was also provided)  and by Fyodorov~\cite{F2018}.  The latter paper also revealed that for real eigenvalues
of real Ginibre matrices $Gin_1$ the diagonal overlaps $\mathcal{O}_{ii}$ are distributed with even heavier tail: ${\cal P}\left(\mathcal{O}_{ii}\right)\sim \mathcal{O}^{-2}_{ii}$, making even the mean of the overlap divergent.

To address the above distributions it is convenient to introduce the following natural generalization of the equation (\ref{eq:O1}):
\begin{equation}
{\cal P}_N(z,t)=\left\langle\sum_{i}^{}\delta(O_{ii}-1-t)\delta(z-\lambda_i)\right\rangle \label{eq:jointPdfDef}
\end{equation}
interpreted as the joint density function (JDF) of the `diagonal' (or `self-overlap') non-orthogonality  factor $t=O_{ii}-1$ for  the right and left eigenvectors corresponding to eigenvalues in the vicinity of a point $z=x+iy$ in the complex plane. The summation runs over all real eigenvalues for the real ensembles and over all eigenvalues for complex ensembles. As such, it is not a probability density, because it is normalized to the mean total number of (real) eigenvalues.
As was shown in \cite{BD2018,D2019a,F2018} the JDF ${\cal P}_N(z,t)$ tends (after an appropriate rescaling of the variables $z$ and $t$ with the size $N$) to the inverse gamma distribution as $N\gg 1$:
 \begin{equation}
 \lim_{N\to\infty} N\,{\cal P}_N(z\sqrt{N},Nt)=\frac{ \langle \rho(z)\rangle}{t} e^{-\frac{\tilde{O}_1(z)}{t\langle \rho(z)\rangle}}\left(\frac{\tilde{O}_1(z)}{t\langle \rho(z)\rangle}\right)^{\beta}, \quad |z|<1. \label{eq:BulkLimit}
 \end{equation}
Here parameter $\beta=1$ corresponds to the real eigenvalues of real Ginibre matrices (in which case the parameter $z$ should be chosen to be real), $\beta=2$ to the complex Ginibre case and $\beta=4$ to the quaternionic Ginibre case. Recall that in \eqref{eq:BulkLimit}
the limiting spectral density of real eigenvalues for $\beta=1$ is $\langle \rho(z)\rangle=\frac{1}{2\sqrt{2\pi}}$ for the interval
 $|z|<1$, whereas the limiting spectral density of complex eigenvalues for $\beta=2$ is $\langle \rho(z)\rangle=\pi^{-1}$ inside the unit
circle  $|z|<1$.

  For complex Ginibre the limiting expression (\ref{eq:BulkLimit}) naturally incorporates  the Chalker-Mehlig result. In the formula above $\tilde{O}(z)=\pi^{-1}(1-|z|^2)$, which is the large $N$ limit of the rescaled one-point correlation function. Interestingly, despite the fact that for $\beta=1$ the mean value $O_1(z)$ defined via \eqref{eq:O1} does not exist, the closely related rescaled combination, $\tilde{O}_1(z)=\frac{1}{2\sqrt{2\pi}}(1-z^2)$, appears as a parameter in the inverse $\gamma_1$ distribution and therefore defines 
  the {\it typical} value of the diagonal overlap. Further calculations in a few non-Gaussian rotationally-invariant matrix ensembles (in particular, associated with `truncations' of unitary matrices) done very recently in ~\cite{D2019} suggest that \eqref{eq:BulkLimit} might exhibit a certain degree of universality. Note that the statistics of $O_{ii}$ for complex eigenvalues of real Ginibre matrices remains an outstanding problem, though it would be natural to expect that also in that case, for a fixed $z$ with a non-vanishing imaginary part, the limit should be the same as for the complex Ginibre case.

Returning to the original question of eigenvalue condition numbers for real-valued matrices, the above results in particular imply that in contrast to well-conditioned eigenvalues of symmetric matrices with $\kappa=1$ the typical condition numbers in fully asymmetric random matrices grow with matrix size as $\sqrt{N}$~\cite{F2018} and show strong fluctuations. One of natural questions is then to ask how those properties evolve for matrices with a controlled degree of asymmetry in their entries. The aim of this work is to answer this question. To this end  we consider matrices with i.i.d. real Gaussian entries, such that the entries $X_{ij}$ and $X_{ji}$ are correlated. The joint pdf for the elements of this ensemble, known in the literature either as the real partly-symmetric Ginibre ensemble, or, alternatively, as the real Elliptic Gaussian ensemble, is given by
\begin{equation}
P(X)dX=C_{N}^{-1} \exp\left[-\frac{1}{2(1-\tau^2)}\tr (XX^T-\tau X^2)\right] dX. \label{eq:EllipticPDF}
\end{equation}
Here $dX=\prod_{i,j=1}^{N}dX_{ij}$ is the flat Lebesgue measure over all matrix elements and the normalization constant reads $C_N=(2\pi)^{N^2/2}(1+\tau)^{N/2}(1-\tau^2)^{\frac{N(N-1)}{4}}$. The parameter $\tau\in[0,1]$ tunes the degree of correlation, $\mathbb{E}(X_{ij}X_{ji})=\tau$ for $i\neq j$, and \eqref{eq:EllipticPDF} interpolates between the Real Ginibre Ensemble for $\tau=0$ and an ensemble of
real symmetric matrices (Gaussian Orthogonal Ensemble) for $\tau=1$.  In particularly, it is well-known that for large sizes $N\gg 1$ a nontrivial scaling regime of {\it weak non-Hermiticity} arises as long as the product $N(1-\tau)$ is kept of the order of unity \cite{FKS1997a,FKS1997b,FKS1998,Efetov1997,FS2003}. It is this regime when non-normality gradually develops, and the condition numbers $\kappa_i$ start growing away from the value $\kappa_i=1$. Our considerations allow us to address this regime in a quantitative way.

\section{Statement of the main results}
It turns out that the method of evaluating the JDF in (\ref{eq:jointPdfDef}) suggested for the Ginibre case in \cite{F2018} works for the Elliptic ensemble as well, though actual calculations turn out to be significantly more involved.  Relegating technical detail to the rest of the paper, in this section we present our main findings.

Our main theorem gives the joint density function of the eigenvalue $\lambda_i$ and the shifted overlap $t=O_{ii}-1$
for elliptic matrices of a given size $N$ distributed according to (\ref{eq:EllipticPDF}).
It takes a more compact form when the rescaled variable $q=(1-\tau)t$ is considered. Therefore, we define ${\cal P}_N^{\tau}(z,q) = (1-\tau)^{-1} {\cal P}_{N}(z,\frac{q}{1-\tau})$.

\begin{theorem} \label{th:Main}
Let $X_N$ be an $N\times N$ random matrix with the probability density function  given by \eqref{eq:EllipticPDF}. Let us define the rescaled and shifted eigenvalue condition number $q=(1-\tau)(\kappa^2-1)$. The joint density \eqref{eq:jointPdfDef} of a real eigenvalue $z$ and the associated squared condition number expressed via the variable $q$ is given by
{\small
\begin{multline} \label{eq:MainResult}
\cP_N^{\tau}(z,q)=\frac{1}{2(1+\tau)\sqrt{2\pi} \Gmm{N-1}} \frac{e^{-\frac{z^2}{2(1+\tau)}\left(1+\frac{q}{1+q}\right)}}{\sqrt{q(1+q)}}\left(\frac{q}{q+1+\tau}\right)^{\frac{N}{2}-1}\times
\\
\left[\frac{(1+\tau-2z^2)P_{N-2}+2z[R_{N-2}+\tau(N-2)R_{N-3}]}{1+q}+\frac{P_{N-2}z^2}{(1+q)^2}+\frac{\tau^2(1+\tau)^2N(N-2)P_{N-3}}{(1+\tau+q)^2}+ \right.
\\
\left.\frac{(1+\tau)(1-\tau^2)(N-2)((N-2)P_{N-3}-T_{N-3})}{1+\tau+q} -\frac{2\tau(1+\tau)(N-2)zR_{N-3}}{(1+q)(1+\tau+q)}\right],
\end{multline}
}
where the functions: $P_m:=P_m(z), \, R_m:=R_m(z), \, T_m:=T_m(z)$ are defined in terms of the Hermite polynomials
\begin{equation}
\He_m(z)=\frac{(\pm i)^m}{\sqrt{2\pi}} e^{\frac{z^2}{2}}\int_{\mathbb{R}}t^m e^{-\frac{t^2}{2}\mp i z t}dt,\label{eq:Hermite}
\end{equation}
as
{\small
\begin{eqnarray}
P_{N}(z)&=&N!\sum_{k=0}^{N}\frac{\tau^k}{k!}\left( (k+1)\He_k^2\left(\frac{z}{\sqrt{\tau}}\right)-k\He_{k-1}\left(\frac{z}{\sqrt{\tau}}\right)\He_{k+1}\left(\frac{z}{\sqrt{\tau}}\right)\right), \label{eq:PNdef}
\\
R_N(z) &=& \frac{N!}{2}\sum_{k=0}^{N}\frac{\tau^{k+\frac{1}{2}}}{k!}\left((k+2)\He_{k+1}\left(\frac{z}{\sqrt{\tau}}\right)\He_{k}
\left(\frac{z}{\sqrt{\tau}}\right)
%\right.
\label{eq:RNdef}
%\\
% &&
%\left.
 -k\He_{k+2}\left(\frac{z}{\sqrt{\tau}}\right)\He_{k-1}\left(\frac{z}{\sqrt{\tau}}\right)\right), 
 %\nonumber
\\
%\begin{equation}
T_N(z) &=& N!\sum_{k=0}^{N}\frac{k \tau^k}{k!}\left( (k+1)\He_k^2\left(\frac{z}{\sqrt{\tau}}\right)-k\He_{k-1}\left(\frac{z}{\sqrt{\tau}}\right)\He_{k+1}\left(\frac{z}{\sqrt{\tau}}\right)\right). \label{eq:TNdef}
\end{eqnarray}
}
\end{theorem}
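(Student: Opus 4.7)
The plan is to extend the partial Schur decomposition method of \cite{F2018}, developed there for the Ginibre case $\tau=0$, to the elliptic ensemble. For any real $N\times N$ matrix $X$ with a real eigenvalue $\lambda$ one may write
\begin{equation*}
X = O \begin{pmatrix} \lambda & \bb{v}^T \\ 0 & X' \end{pmatrix} O^T, \qquad O\in O(N)/O(N-1),\ \bb{v}\in\mathbb{R}^{N-1},\ X'\in\mathbb{R}^{(N-1)\times(N-1)},
\end{equation*}
with Jacobian $|\det(\lambda\idm-X')|$. The right eigenvector associated with $\lambda$ is $\bb{r}=Oe_1$ (unit norm), while solving $X^T\bb{l}=\lambda\bb{l}$ under the biorthogonal normalization $\bb{l}^T\bb{r}=1$ yields $\|\bb{l}\|^2=1+\bb{v}^T G\bb{v}$ with $G=[(X'-\lambda\idm)^T(X'-\lambda\idm)]^{-1}$, so $t=O_{ii}-1=\bb{v}^T G\bb{v}$. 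Crucially, the elliptic weight decouples in these block coordinates,
\begin{equation*}
\tr(XX^T-\tau X^2)=(1-\tau)\lambda^2+\|\bb{v}\|^2+\tr(X'X'^T-\tau X'^2),
\end{equation*}
so $\lambda$, $\bb{v}$ and $X'$ become mutually independent: $\lambda$ Gaussian with variance $1+\tau$, $\bb{v}$ Gaussian with covariance $(1-\tau^2)\idm$, and $X'$ distributed according to the elliptic law of size $N-1$.

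Inserted into \eqref{eq:jointPdfDef}, the integration over $O$ contributes only a volume constant. The Gaussian integration over $\bb{v}$ against the delta constraint $\delta(t-\bb{v}^T G\bb{v})$ is then carried out via a Fourier representation of the delta function, producing a factor $[\det(\idm+2i\omega(1-\tau^2)G)]^{-1/2}$ that, combined with the Schur Jacobian, recasts $\cP_N^{\tau}(z,t)$ as a one-dimensional integral in $\omega$ of the $X'$-average
\begin{equation*}
\left\langle\frac{|\det(X'-z\idm)|^2}{\sqrt{\det\bigl[(X'-z\idm)^T(X'-z\idm)+2i\omega(1-\tau^2)\idm\bigr]}}\right\rangle_{X'}.
\end{equation*}
Linearising the remaining square root by introducing an auxiliary Gaussian vector $\bb{u}\in\mathbb{R}^{N-1}$ turns the integrand into a product of two characteristic polynomials of $X'$, one evaluated at $z$ and one at a spectral point depending on $\omega$ and $\bb{u}$. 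Such expectations are computable by the skew-orthogonal polynomial machinery for the real elliptic ensemble, whose skew-orthogonal polynomials are built from the Hermite polynomials $\He_k(z/\sqrt{\tau})$ appearing in \eqref{eq:Hermite}.

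I expect the main technical obstacle to be the explicit evaluation of the resulting $\bb{u}$- and $\omega$-integrals and the reorganisation of the answer into the compact form \eqref{eq:MainResult}. After the Gaussian integrations, the $X'$-average produces bilinear sums of Hermite polynomials that, by repeated use of the three-term recurrence $\He_{k+1}(x)=x\He_k(x)-k\He_{k-1}(x)$, the derivative rule $\He_k'(x)=k\He_{k-1}(x)$, and Christoffel--Darboux-type summation, should collapse into the three functions $P_N$, $R_N$, $T_N$ of \eqref{eq:PNdef}--\eqref{eq:TNdef}. The rational prefactors $(1+q)^{-1}$, $(1+q)^{-2}$, $(1+\tau+q)^{-1}$ and $(1+\tau+q)^{-2}$ in \eqref{eq:MainResult} then arise naturally when the $\omega$-integral is performed by residues, while the overall power $q^{N/2-1}(1+q)^{-1/2}(1+\tau+q)^{-N/2+1}$ reflects the joint effect of the Gaussian measures on $\bb{v}$ and $\bb{u}$. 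A useful consistency check is the limit $\tau\to 0$, in which $\tau^{k/2}\He_k(z/\sqrt{\tau})\to z^k$ and \eqref{eq:MainResult} should reduce to the Ginibre expression of \cite{F2018}.
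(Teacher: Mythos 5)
Your set-up matches the paper's: the partial Schur decomposition, the factorization of $\tr(XX^T-\tau X^2)$ over the block coordinates (yielding a Gaussian $\lambda$ of variance $1+\tau$, a Gaussian $\bb{w}$ of covariance $(1-\tau^2)\idm$, and an independent $(N-1)\times(N-1)$ elliptic matrix), the overlap identity $t=\bb{w}^T(\lambda-X')^{-1}(\lambda-X'^T)^{-1}\bb{w}$, and the reduction, after the $\bb{w}$- and $O$-integrations, to the single remaining average
\[
\Bigl\langle \det(z-X')(z-X'^T)\,\det{}^{-1/2}\bigl[2p(1-\tau^2)\idm+(z-X')(z-X'^T)\bigr]\Bigr\rangle_{X'}.
\]
These steps are exactly the content of Lemma~\ref{th:Laplace} (whether you transform in Laplace conjugate $p$ or Fourier conjugate $\omega$ is cosmetic).

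The gap is in the evaluation of this ratio. You propose to linearize $\det^{-1/2}[(X'-z)^T(X'-z)+c\,\idm]$ with a \emph{single} real Gaussian vector $\bb{u}\in\mathbb{R}^{N-1}$, and you claim the result is an $X'$-average of a product of two characteristic polynomials amenable to skew-orthogonal polynomial machinery. This does not follow: a single Gaussian vector produces $\exp\bigl(-\tfrac12\|(X'-z)\bb{u}\|^2-\tfrac{c}{2}\|\bb{u}\|^2\bigr)$, which is still \emph{quadratic} in $X'$; it is neither a characteristic polynomial nor can the Gaussian measure on $X'$ be integrated against it directly. Nor can one factor $\det[(X'-z)^T(X'-z)+c\,\idm]$ into two characteristic polynomials at complex shifts of $z$, since $(X'-z)$ is not normal. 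To linearize in $X'$ one must introduce a \emph{pair} of coupled Gaussian vectors $\bb{s}_1,\bb{s}_2$ (equivalently, a $2(N-1)$-dimensional block-Gaussian with the off-diagonal blocks $i(z-X')$, $i(z-X'^T)$), as in the paper's representation preceding \eqref{eq:DNfirst}. Once that is done, the $X'$-average is of $\exp[\text{linear in }X']$ and can be taken exactly, but the price is a quartic exponent in the auxiliary variables.

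From that point the paper proceeds quite differently from your plan: the two determinants in the numerator are written as a Berezin integral over four Grassmann vectors, the $X'$-average produces quartic fermion--fermion and fermion--boson couplings which are then bilinearized by Hubbard--Stratonovich fields $a,b,c,f$, the remaining Grassmann integral yields a Pfaffian of a $4(N-1)\times 4(N-1)$ antisymmetric matrix \eqref{eq:Mdef} that collapses (via a Schur form of the rank-two bosonic bilinear) to a finite-size Pfaffian, and the bosonic integrals are reduced to three Gram-matrix variables $Q_1,Q_2,Q$. The $a,b,c,f$ integrals then define precisely the functions $P_N,R_N,T_N$ of \eqref{eq:PNdef}--\eqref{eq:TNdef}, and a nontrivial cancellation identity \eqref{eq:Identity} is needed to tame an apparent logarithmic divergence in the last bosonic integral. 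None of this structure is anticipated in your sketch, and the skew-orthogonal-polynomial route you gesture at has not been made to connect to the object you actually have. Your $\tau\to 0$ consistency check is a good idea but does not substitute for the missing computation.
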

\remark Note that for $\tau=0$ these quantities simplify to $P_N=e^{z^2}\Gmm{N+1,z^2}$, $R_N=zP_N$ and $T_N=Nz^2P_{N-1}$, with $\Gmm{N+1,z}=\int_z^{\infty}u^Ne^{-x}du$, and the known result~\cite[eq. 2.5]{F2018} is recovered.
\begin{remark}
The exact mean density of purely real eigenvalues $ \rho^{(r)}_N(z)$ for the real Elliptic ensemble of even size $N$ is known thanks to Forrester and Nagao \cite{ForNag2008}. It is given by $\rho_N^{(r)}(z)=\rho_N^{(1)}(z)+\rho_N^{(2)}(z)$ with
\begin{align}
\rho_{N}^{(1)}(z)&=\frac{1}{\sqrt{2\pi}}e^{-\frac{z^2}{1+\tau}}\sum_{k=0}^{N-2}\frac{\tau^k}{k!}He_k^2\left(\frac{z}{\sqrt{\tau}}\right), \label{eq:RealDens1}
\\
\rho_{N}^{(2)}(z)&=\frac{1}{\sqrt{2\pi}(1+\tau)\Gmm{N-1}}e^{-\frac{z^2}{2(1+\tau)}}\tau^{N-3/2}He_{N-1}\left(\frac{z}{\sqrt{\tau}}\right)\int_0^z e^{-\frac{u^2}{2(1+\tau)}}He_{N-2}\left(\frac{u}{\sqrt{\tau}}\right)du. \label{eq:RealDens2}
\end{align}
For odd $N$  the density can be obtained using the method from~\cite{FM2009}.
 Our expression (\ref{eq:MainResult}) by its very definition must reproduce the Forrester-Nagao
result after integration over the variable $t$.  Performing such an integration analytically is, however, a challenging task which
we managed to complete for $N=2,3,4$. Nevertheless, we checked that performing such an integral numerically for moderate values of $N$ gives  results that are indistinguishable from the density of real eigenvalues, see Appendix~\ref{sec:AppDens}.
\end{remark}

As the expression (\ref{eq:MainResult}) is exact, the joint density in the original variable, $\cP_N(z,t)$, can be further analyzed in interesting scaling limits as $N\to \infty$.
The first of such limits is the so-called `bulk scaling' corresponding to the eigenvalues
inside the limiting support of the spectrum which (after appropriate rescaling $z\to \sqrt{N}z$) for a fixed $z$ and $0\le \tau<1$ represents an ellipse in the complex plane (hence the name for the ensemble), centered at the origin, with semi-axis $1-\tau$ along imaginary axis and $1+\tau$ along the real axis.
Since we are dealing only with real eigenvalues, we restrict ourselves to real $z$ such that $|z|<1+\tau$, where the following asymptotics holds:
%By the appropriate rescaling of parameters $z$ and $q$ we obtain the bulk scaling
\begin{corollary}\label{th:BulkScaling} (bulk scaling) Define for a fixed $0\le \tau<1$ and real $z$ satisfying  $|z|<1+\tau$ the limiting scaled JDF as
 ${\cal P}_{bulk}(z,t)=\lim_{N\to\infty}N{\cal P}_N (\sqrt{N}z,Nt)$. Then
\begin{equation}
{\cal P}_{bulk}(z,t)=\frac{\sqrt{1-\tau^2}}{2\sqrt{2\pi}}\frac{\left[1-\frac{z^2}{(1+\tau)^2}\right]}{t^2}e^{-\frac{1-\tau^2}{2t}\left[1-\frac{z^2}{(1+\tau)^2}\right]}.  \label{eq:BulkScaling}
\end{equation}
\end{corollary}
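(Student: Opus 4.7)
The plan is to substitute the bulk scaling $z\mapsto\sqrt{N}z$ and $t\mapsto Nt$ (so that $q=(1-\tau)t\mapsto N(1-\tau)t$) directly into the exact formula (\ref{eq:MainResult}) and extract the leading $N\to\infty$ behaviour term by term. The elementary prefactor is handled by direct Taylor expansion: one finds $(q/(q+1+\tau))^{N/2-1}\to e^{-(1+\tau)/[2(1-\tau)t]}$, $1/\sqrt{q(1+q)}\sim 1/[N(1-\tau)t]$, and $\exp[-\tfrac{Nz^2}{2(1+\tau)}(1+q/(1+q))]\sim e^{-Nz^2/(1+\tau)}\,e^{z^2/[2(1-\tau^2)t]}$. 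The large factor $e^{-Nz^2/(1+\tau)}$ is destined to cancel the exponential growth of the Hermite sums, while the remaining exponentials combine, after a short regrouping, into the factor $e^{-(1-\tau^2)[1-z^2/(1+\tau)^2]/(2t)}$ appearing in (\ref{eq:BulkScaling}).

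The heart of the proof is the bulk asymptotics of the sums $P_m,R_m,T_m$ at argument $\sqrt{N}z$. Using the Christoffel--Darboux identity $(k+1)\He_k^2(x)-k\He_{k-1}(x)\He_{k+1}(x)=k!\sum_{j\leq k}\He_j^2(x)/j!$, together with its $y$-derivative analogue applied to the combination in (\ref{eq:RNdef}), one rewrites (\ref{eq:PNdef})--(\ref{eq:TNdef}) as double sums and interchanges the summation order. The inner sums (together with their $\tau$- and $y$-derivatives) are then evaluated in closed form via Mehler's bilinear identity $\sum_{j\geq 0}\tau^j\He_j^2(y)/j!=(1-\tau^2)^{-1/2}\exp[\tau y^2/(1+\tau)]$. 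For $|z|<1+\tau$ the partial-sum tails are negligible (the same fact that underlies the bulk Forrester--Nagao density $\rho_N^{(1)}(\sqrt{N}z)\to 1/\sqrt{2\pi(1-\tau^2)}$), and one obtains
\[P_{N-k}(\sqrt{N}z)\sim \frac{(N-k)!\,e^{Nz^2/(1+\tau)}}{(1-\tau)\sqrt{1-\tau^2}},\qquad R_{N-k}\sim \frac{\sqrt{N}\,z}{1+\tau}\,P_{N-k},\qquad T_{N-k}\sim \frac{Nz^2}{(1+\tau)^2}\,P_{N-k}.\]

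Substituting these into the six-term bracket of (\ref{eq:MainResult}), the three terms containing $1/(1+q)$ cancel at leading order, a reflection of the identity $R/P\sim\sqrt{N}z/(1+\tau)$; the dominant surviving contribution then comes from $(1+\tau)(1-\tau^2)(N-2)[(N-2)P_{N-3}-T_{N-3}]/(1+\tau+q)$, where the clean relation $(N-2)P_{N-3}-T_{N-3}\sim P_{N-2}\bigl[1-z^2/(1+\tau)^2\bigr]$ supplies the characteristic geometric factor. Assembling this with the prefactor gives (\ref{eq:BulkScaling}). The main obstacle is the careful bookkeeping of the cancellation among the first three bracket terms: each is individually of order $N\cdot(N-2)!\,e^{Nz^2/(1+\tau)}$ and they cancel only at leading order, so their $O(1)$ subleading pieces must be carried consistently alongside the surviving $O(1)$ contribution from the $1/(1+\tau+q)$ term to obtain the correct overall constant. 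A cleaner route that sidesteps this bookkeeping is to revisit the contour-integral representation behind Theorem~\ref{th:Main} itself and perform a joint saddle-point analysis in the bulk limit, which packages the required cancellations naturally.
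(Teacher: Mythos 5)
Your overall plan -- insert the large-$N$ asymptotics of $P_m$, $R_m$, $T_m$ at argument $\sqrt N z$ into the exact formula \eqref{eq:MainResult}, handle the elementary prefactor by expansion, and identify the dominant bracket term -- is exactly the route the paper takes. Your proposed asymptotics for the three Hermite sums are also correct in form, including the relations $R\sim \sqrt N z P/(1+\tau)$ and $T\sim N z^2 P/(1+\tau)^2$ and the key combination $(N-2)P_{N-3}-T_{N-3}\sim P_{N-2}[1-z^2/(1+\tau)^2]$. Where you differ is in the sub-lemma: the paper obtains these asymptotics by converting \eqref{eq:PNdef}--\eqref{eq:TNdef} into two-dimensional integrals over incomplete gamma functions (Lemmas 3.5--3.6) and applying the Laplace/saddle-point method, whereas you propose to undo the Christoffel--Darboux step and close the resulting double sum with Mehler's kernel. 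That route is viable and arguably cleaner, but you should make explicit why the partial sum $\sum_{k\le N}$ may be replaced by the infinite Mehler sum for $|z|<1+\tau$ -- this is precisely the Plancherel--Rotach/incomplete-gamma step the paper's $\theta_\infty$-function encodes, and it is where the condition $|z|<1+\tau$ enters.

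The genuine flaw is your final paragraph. Once the leading pieces of the first bracket term's numerator cancel via $R\sim\sqrt N zP/(1+\tau)$, what remains of that term is of order $P_{N-2}/(Nq)$; the terms with $(1+q)^{-2}$, $(1+\tau+q)^{-2}$, and $(1+q)^{-1}(1+\tau+q)^{-1}$ are likewise of order $P_{N-2}/(Nq^2)$. The surviving $(1+\tau+q)^{-1}$ term is of order $P_{N-2}/q$, a full factor of $N$ larger. There is no delicate bookkeeping to do: no $O(1)$ subleading pieces of the cancelling terms ``must be carried consistently alongside'' the dominant term -- they are uniformly $O(1/N)$ relative to it, exactly as the paper states (``only the second to last term ... provides the leading order contribution''). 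Your suggested detour through a joint saddle point on the supersymmetric integral is therefore unnecessary and does not buy anything here. Separately, check the arithmetic of your exponential regrouping: with the substitution $q\mapsto N(1-\tau)t$ that you wrote down, $e^{-(1+\tau)/[2(1-\tau)t]}\cdot e^{z^2/[2(1-\tau^2)t]}=e^{-\frac{1+\tau}{2(1-\tau)t}[1-z^2/(1+\tau)^2]}$, which is \emph{not} equal to the claimed $e^{-\frac{1-\tau^2}{2t}[1-z^2/(1+\tau)^2]}$ unless $\tau=0$. This traces to the relation between $q$ and $t$; the structure of Theorem~\ref{th:RatioDet} (Laplace kernel $e^{-pt(1-\tau)}$) forces the dummy variable $q$ of \eqref{eq:MainResult} to satisfy $t=(1-\tau)q$, i.e.\ $q=t/(1-\tau)$, so the correct substitution is $q\mapsto Nt/(1-\tau)$; with that the exponent comes out as $-\frac{1-\tau^2}{2t}[1-z^2/(1+\tau)^2]$ and the prefactor likewise assembles to $\sqrt{1-\tau^2}/(2\sqrt{2\pi})$. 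You should resolve this factor before declaring the constant in \eqref{eq:BulkScaling} verified.
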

This asymptotics shows that the typical value of the diagonal overlap $t=O_{ii}-1$ in this regime is always of the order $N$ as $N\gg1$,
similarly to the behaviour for the Ginibre case $\tau=0$. Moreover, by recalling that the asymptotic density of real eigenvalues in  the elliptic case is $\left\langle \rho(z)\right\rangle=\frac{1}{\sqrt{2\pi (1-\tau^2)}}$ and introducing $\tilde{O}_1(z)=\frac{\sqrt{1-\tau^2}}{2\sqrt{2\pi}}(1-\frac{z^2}{(1+\tau)^2})$, we see that (\ref{eq:BulkScaling}) is exactly of the form (\ref{eq:BulkLimit}) for $\beta=1$.

%Rescaling at the edge by parameterizing $z=\sqrt{N}(1+\tau)+\frac{\delta_{\tau}}{\sqrt{1-\tau^2}}$ and $q=\sqrt{N}\sqrt{\frac{1-\tau}{1+\tau}}(\kappa^2-1)$ we obtain ${\cal P}(z,O_{11})={\cal P}_{edge}(\delta_{\tau},q)$.

When approaching the boundary $|z|=1+\tau$ of the eigenvalue support the typical diagonal overlap $\tilde{O}_1(z)$ tends to zero, and
in the appropriate scaling vicinity of the boundary it becomes parametrically weaker, as the variable $t$ in such a regime becomes of the order $\sqrt{N}$:
\begin{corollary} \label{th:edge}
 (edge scaling) Take a fixed $0\le \tau<1$ and parametrize $z$ and $q$ as $z=\sqrt{N}(1+\tau)+\delta_{\tau}\sqrt{1-\tau^2}$ and $q=\sigma \sqrt{N(1-\tau^2)}$. Then the limit ${\cal P}_{edge}(\delta_{\tau},\sigma)=\lim_{N\to\infty} \sqrt{N}{\cal P}_N(z,q)$ exists and
 %\left(\sqrt{N}(1+\tau)+\sqrt{1-\tau^2}\delta_{\tau},t\sqrt{N}(1-\tau)\sqrt{1-\tau^2}\right)$
 is equal to
\begin{equation}
{\cal P}_{edge}(\delta_{\tau},\sigma)=\frac{1}{4\pi \sigma^2(1-\tau^2)}e^{-\frac{1}{4\sigma^2}+\frac{\delta_{\tau}}{\sigma}}\left[ e^{-2\delta_{\tau}^2}+\left(\frac{1}{\sigma}-2\delta_{\tau}\right) \int_{2\delta_{\tau}}^{\infty} e^{-\frac{u^2}{2}}du \right]. \label{eq:EdgeScaling}
\end{equation}
\end{corollary}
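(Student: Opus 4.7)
The plan is to take the exact finite-$N$ expression (\ref{eq:MainResult}), substitute the edge parametrization $z=\sqrt{N}(1+\tau)+\delta_{\tau}\sqrt{1-\tau^2}$, $q=\sigma\sqrt{N(1-\tau^2)}$, and extract the $N\to\infty$ limit after the overall rescaling by $\sqrt{N}$. The analysis splits naturally into two independent pieces: the explicit $N$-dependent prefactor sitting in front of the square bracket in (\ref{eq:MainResult}), and the Hermite-polynomial sums $P_m, R_m, T_m$ at indices $m\in\{N-3,N-2\}$ evaluated at $y:=z/\sqrt{\tau}\sim\sqrt{N/\tau}\,(1+\tau)$.

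For the explicit prefactor, expansion of $\log(q/(q+1+\tau))=-(1+\tau)/q+(1+\tau)^2/(2q^2)+O(q^{-3})$ together with $1+q/(1+q)=2-1/(1+q)$ in the scaling variables shows that the two $O(\sqrt{N})$ contributions — one coming from the exponential and the other from the power $(q/(q+1+\tau))^{N/2-1}$ — cancel each other, yielding after a short calculation
\[
\left(\frac{q}{q+1+\tau}\right)^{N/2-1}\exp\!\left[-\frac{z^2}{2(1+\tau)}\!\left(1+\frac{q}{1+q}\right)\right]=e^{-\frac{z^2}{1+\tau}}\,e^{-\frac{1}{4\sigma^2}+\frac{\delta_{\tau}}{\sigma}+o(1)}.
\]
The divergent factor $e^{-z^2/(1+\tau)}$ will be absorbed by the leading growth of the Hermite polynomials discussed next.

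For the Hermite sums, the key observation is that each bracketed combination in (\ref{eq:PNdef})--(\ref{eq:TNdef}) is an instance of the confluent Christoffel-Darboux identity; for example, $(k+1)\He_k^2(y)-k\He_{k-1}(y)\He_{k+1}(y)=k!\sum_{j=0}^{k}\He_j^2(y)/j!$, with analogous identities reducing the mixed-index terms in $R_m$ to the off-diagonal Christoffel-Darboux kernel. Exchanging the order of summation turns each of $P_m, R_m, T_m$ into a single sum with truncated-geometric weights $(\tau^j-\tau^{m+1})/(1-\tau)$. Using the Mehler-type identity $\sum_{k=0}^{\infty}\tau^k\He_k^2(y)/k!=(1-\tau^2)^{-1/2}\exp[\tau y^2/(1+\tau)]=(1-\tau^2)^{-1/2}\exp[z^2/(1+\tau)]$, the infinite part of each such sum supplies the compensating factor $e^{z^2/(1+\tau)}$, while the finite-$N$ tail admits an explicit integral representation. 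Saddle-point evaluation of (\ref{eq:Hermite}) with $j$ near $N$ and $y$ in the edge regime — where the saddle merges with the summation boundary — then produces the Gaussian $e^{-2\delta_{\tau}^2}$ and the complementary error integral $\int_{2\delta_{\tau}}^{\infty}e^{-u^2/2}du$ appearing in (\ref{eq:EdgeScaling}), by exactly the same mechanism that yields the Forrester-Nagao edge density from (\ref{eq:RealDens1})--(\ref{eq:RealDens2}).

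Combining everything in (\ref{eq:MainResult}), the divergent exponential $e^{-z^2/(1+\tau)}$ from the prefactor cancels against the Mehler-sum part of the Hermite asymptotics; the Gamma denominator $\Gmm{N-1}$ is absorbed via Stirling together with the elementary powers of $N$ coming from the Christoffel-Darboux normalization and the combinatorial factors $N-2$, $N(N-2)$ in the bracket. The surviving $\delta_{\tau}$-dependence reassembles into $e^{-2\delta_{\tau}^2}+(1/\sigma-2\delta_{\tau})\int_{2\delta_{\tau}}^{\infty}e^{-u^2/2}du$, giving (\ref{eq:EdgeScaling}). The main obstacle is managing these cancellations consistently: each of the five contributions in the bracket of (\ref{eq:MainResult}) is individually of order $e^{N(1+\tau)}$, so the Hermite asymptotics must be controlled to sub-leading order, and in particular the $O(N^{-1/2})$ Gaussian shift that distinguishes the indices $m=N-2$ and $m=N-3$ is essential for the finite remainder to assemble into the claimed formula. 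Compared to the bulk Corollary~\ref{th:BulkScaling}, the extra difficulty is precisely this boundary merger in the Christoffel-Darboux sum, which is what introduces the error-function term absent in the bulk.
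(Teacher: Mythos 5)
Your route through Christoffel--Darboux, summation exchange, Mehler's formula, and boundary asymptotics of individual Hermite polynomials is genuinely different from what the paper does. The paper never returns to the double-sum form of $P_m,R_m,T_m$; instead it works from the integral representations \eqref{eq:PnIntegral} and \eqref{eq:TnIntegral} in which $\Gamma(N,x)/\Gamma(N)$ appears, re-expresses the incomplete gamma as $x^N\int_1^\infty u^{N-1}e^{-ux}\,du$, performs the $q$-saddle at $q_*=-i\sqrt{2}$, and then reads off the boundary layer in the $u$-integral near $u=1$ of width $O(N^{-1/2})$. The error function and the Gaussian $e^{-2\delta_\tau^2}$ emerge from the $u$-integral with a shifted lower limit; the crucial cancellation at leading order between $(N-2)P_{N-3}$ and $T_{N-3}$ is not merely noted but computed directly by combining \eqref{eq:PnIntegral} and \eqref{eq:TnIntegral} under one integral sign, so that the difference $(N+1)P_N-T_N$ is evaluated to the required $O(\sqrt{N})$ accuracy in one stroke. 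Your approach buys a conceptually clean separation into ``Mehler bulk'' plus ``boundary correction'', analogous to how the Forrester--Nagao density organizes itself; the paper's approach buys an explicit mechanism for controlling all five terms simultaneously, which is exactly where the subleading bookkeeping lives.

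That said, as written your argument has real gaps that prevent it from being checked. The sentence ``Saddle-point evaluation of (\ref{eq:Hermite}) with $j$ near $N$ and $y$ in the edge regime --- where the saddle merges with the summation boundary --- then produces the Gaussian $e^{-2\delta_\tau^2}$ and the complementary error integral'' is an assertion of the entire computation, not a computation; the precise edge scaling of $\sum_{j>N}\tau^j\He_j^2(y)/j!$ and $\tau^{N+1}\sum_{j\le N}\He_j^2(y)/j!$ (both of which contribute at the same order as the sought-after correction to the Mehler constant) has to be carried out to the constants, and until it is, the appearance of $2\delta_\tau$ rather than some other combination of $\delta_\tau$ and $\tau$ as the erfc argument is not established. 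You also claim all three families reduce to ``truncated-geometric weights $(\tau^j-\tau^{m+1})/(1-\tau)$,'' but for $T_m$ the interchange gives $\sum_{k=j}^{N}k\tau^k$, which is the $\tau\partial_\tau$ of a truncated geometric sum and not itself truncated-geometric; this matters because $(N-2)P_{N-3}-T_{N-3}$ must cancel at leading order, and the cancellation mechanism lives precisely in the difference between the two weight functions. Finally, you acknowledge the $O(N^{-1/2})$ sensitivity distinguishing indices $N-2$ and $N-3$ is essential but give no device for controlling it; the paper's device is the combined integral representation for $(N+1)P_N-T_N$, and your outline offers no substitute. The plan is plausible, but in its current state it defers exactly the part of the proof where the content is.
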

Note that this form is essentially the same as the one found for the real Ginibre case in \cite{F2018}.

Finally, the ultimate goal of our study is to investigate the {\it weak non-Hermiticity} regime occuring for $N\to \infty$ when $\tau$ approaches unity with the rate $O(N^{-1})$, so that the parameter $2N(1-\tau)=a^2$ is fixed. Such parameter therefore controls the deviation from the fully symmetric
limit. In this regime of `almost symmetric' matrices  already a finite fraction of eigenvalues of the order of $N$ are real, and asymptotically their mean density  is given by \cite{Efetov1997,FKS1998}
\begin{equation}
\left\langle\rho(z)\right\rangle=\rho_{sc}(z)\int_0^{1} e^{-\frac{1}{2}As^2} ds, \qquad  |z|<2, \label{eq:PWeakNonhden}
\end{equation}
where $\rho_{sc}(z)=\frac{1}{2\pi}\sqrt{4-z^2}$ is the standard Wigner semicircle density characterizing real symmetric GOE matrices,
and $A=(\pi \rho_{sc}(z)a)^2$. % Obviously $\left\langle\rho(z)\right\rangle|_{a\to 0}=\rho_{sc}(z)$.

As anticipated, such a regime turns out to be not only `weakly non-Hermitian', but also `weakly non-normal'
as the typical value of the diagonal overlap $t=O_{ii}-1$ turns out to be of the order of unity in the bulk of the spectrum, namely
\begin{corollary} \label{th:weaknonh}
Let $|z|<2$ and $t\ge 0$ be fixed. Consider the limit ${\cal P}_{weak}(z,t)=\lim_{N\to\infty} N^{-1/2}{\cal P}_N(z\sqrt{N},t)$  with
$2N(1-\tau)=a^2$ fixed. Then
\begin{equation}
{\cal P}_{weak}(z,t)=\frac{A}{2}\rho_{sc}(z)\frac{e^{-\frac{A}{2t}}}{t^2}\int_0^1 e^{-\frac{1}{2}As^2}\left(1+A+\frac{A}{t}-As^2\right)s^2ds , \label{eq:PWeakNonh}
\end{equation}
where $A=(\pi \rho_{sc}(z)a)^2$ and $\rho_{sc}(z)=\frac{1}{2\pi}\sqrt{4-z^2}$.
\end{corollary}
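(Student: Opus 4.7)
The starting point is the exact finite-$N$ expression \eqref{eq:MainResult}. The plan is to substitute the weak non-Hermiticity scaling $\tau=1-a^2/(2N)$, $z\mapsto z\sqrt{N}$, and $t$ fixed, so that $q=(1-\tau)t=a^2t/(2N)\to 0$, and track the leading behaviour of every factor. Since ${\cal P}_N(z,t)=(1-\tau){\cal P}_N^\tau(z,q)$, what we seek is
\[
{\cal P}_{weak}(z,t)=\lim_{N\to\infty}\frac{a^2}{2N^{3/2}}{\cal P}_N^\tau(z\sqrt{N},a^2t/(2N)).
\]
The elementary prefactors are easy: $(1+\tau)\to 2$, $\sqrt{q(1+q)}\sim\sqrt{a^2 t/(2N)}$, and
\[
\Bigl(\tfrac{q}{q+1+\tau}\Bigr)^{N/2-1}\sim \Bigl(\tfrac{a^2t}{4N}\Bigr)^{N/2}
\]
is exponentially small in $N$. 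The quadratic factor in the exponent gives $\exp[-Nz^2/4-a^2 t z^2/(8N(1+q))+O(1/N)]$, whose $N z^2/4$ part must be cancelled by the Hermite-polynomial sums; the remaining piece reduces, using $z^2/4=1-(4-z^2)/4$ and $A=\pi^2\rho_{sc}^2(z)a^2=a^2(4-z^2)/4$, to a finite limit.

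The main technical work is the asymptotic analysis of $P_{N-2}$, $R_{N-2}$, $R_{N-3}$, $P_{N-3}$ and $T_{N-3}$ evaluated at $z\sqrt{N}$ in the regime $\tau\to 1$. I would insert the integral representation \eqref{eq:Hermite} for each Hermite polynomial, interchange the sum and the integrals, and use
\[
\sum_{k=0}^{M}\frac{(\tau u v)^k}{k!}=e^{\tau uv}\Bigl(1-\frac{\Gamma(M+1,\tau uv)}{M!}\Bigr)
\]
to convert the truncated sums into double contour integrals with an incomplete-gamma weight. A saddle-point evaluation, with saddle on the unit circle parametrised by $u=e^{i\varphi}$, $v=e^{-i\varphi}$ and $z=2\cos\varphi$, reproduces the Wigner semicircle $\rho_{sc}(z)$, while the Poissonian incomplete-gamma factor translates into an $s$-integral over $[0,1]$ with weight $e^{-As^2/2}$, precisely as in the derivation of the real density \eqref{eq:PWeakNonhden} in \cite{Efetov1997,FKS1998}. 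In fact, the combination forming $P_{N-2}$ is closely related to the diagonal sum already appearing in $\rho_N^{(1)}$, while $R_N$ and $T_N$ are (up to factors of $k$ and $z$) the same type of sums with shifted indices; hence their weak-non-Hermiticity asymptotics can be obtained by differentiating that known result with respect to $z$ and by shifting $s\to s$ times simple polynomial factors, producing additional $s^2$ and $1$ weights inside the $s$-integral.

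Once each sum is replaced by its leading-order form $\sqrt{N}\,\rho_{sc}(z)\int_0^1 F(s;z)e^{-As^2/2}ds$ with an explicitly computed polynomial $F$, the bracket in \eqref{eq:MainResult} becomes a linear combination of such integrals. The powers of $N$ from $N(N-2)$ in front of $P_{N-3}$ and $(N-2)^2$ in front of $T_{N-3}$ combine with the $\sqrt{N}$ from each asymptotic Hermite sum and the $(q/(q+1+\tau))^{N/2-1}$ prefactor to give the expected total power $N^{3/2}$, matching the $N^{-3/2}$ outside. The final step is algebraic simplification: collecting the $s$-integrals with their weights $1$, $s^2$, and using the relation $A=a^2(4-z^2)/4$ to recognise the combination $1+A+A/t-As^2$ multiplying $s^2$ under the integral, and $e^{-A/(2t)}/t^2$ from the remaining exponential and algebraic prefactors, one obtains \eqref{eq:PWeakNonh}.

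The hard part will be the saddle-point analysis of the double integrals representing $P_{N-2}$, $R_{N-2}$, $T_{N-3}$ and in particular the cross-term $R_{N-3}$, because one must keep subleading $O(1/\sqrt{N})$ corrections in individual pieces — many leading contributions cancel, and only after cancellation do the $s^2$ and the asymmetric $(1/\sigma$-type$)$ pieces survive. Bookkeeping of these cancellations, rather than any single nontrivial integral, is the main obstacle; the computation done for the edge scaling in Corollary \ref{th:edge} provides a reliable template, and the consistency check is that integrating the answer \eqref{eq:PWeakNonh} over $t\in(0,\infty)$ must reproduce \eqref{eq:PWeakNonhden}.
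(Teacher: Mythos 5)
Your overall plan---insert the integral representation \eqref{eq:Hermite} into the sums defining $P_N$, $R_N$, $T_N$, resum via the incomplete-gamma identity, then extract the large-$N$ behaviour by a saddle-point method---is indeed the paper's route, and your remark that these sums are cousins of the Forrester--Nagao density and that $R_N$ can be obtained by differentiating $P_{N+1}$ is correct (the paper uses $R_N=\frac{1}{2(N+1)}P_{N+1}'$ and a parallel integral representation for $T_N$). Two details of your sketch are off, however, in ways that would derail the calculation.

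First, the mechanism producing the $s$-integral. After changing to $p=(t+s)/\sqrt{2}$, $q=(t-s)/\sqrt{2}$ in the integral representation for $P_N$, the $q$-integral is done by a single saddle at $q_\ast=-iz\sqrt{2}/(1+\tau)$ on the real-variable contour; there is no unit circle of saddles and no $u=e^{i\varphi}$, $z=2\cos\varphi$ parametrisation. More importantly, you attribute the weight $e^{-As^2/2}$ in the final $s$-integral to the incomplete-gamma (Poissonian) factor. In fact that weight comes from the Gaussian damping of the \emph{$p$-integral}, $e^{-p^2a^2/4}$; the incomplete gamma, through $\theta_\infty$, supplies only the sharp cutoff $|p|<\sqrt{(4-z^2)/2}$, which after rescaling $p=s\sqrt{(4-z^2)/2}$ produces the range $s\in[0,1]$. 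And you omit the crucial observation that drives everything in this regime: because $1/\tau-1\sim a^2/(2N)$, the factor $e^{-Np^2(1/\tau-1)/2}=e^{-p^2a^2/4}$ has an $N$-independent width, so the $p$-integral is dominated by $p=O(1)$ rather than $p=O(N^{-1/2})$ as in the bulk, which in turn promotes the $Np^2\theta_{N-1}$ piece over the $\theta_N$ piece. That single rescaling of $p$ is what replaces the bulk step function by the $s$-integral in \eqref{eq:PWeakNonh}.

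Second, the template. You anticipate delicate cancellations requiring subleading $O(N^{-1/2})$ Laplace corrections and invoke ``$1/\sigma$-type'' pieces from the edge scaling of Corollary~\ref{th:edge}. That analogy is misplaced: the edge computation needed the uniform transition behaviour of $\theta_N$ because the argument was tuned to within $O(N^{-1/2})$ of the boundary of its support; here $|z|<2$ is a bulk condition, $\theta_N\to\theta_\infty$ holds cleanly, and the paper's proof is a \emph{direct} substitution of the leading asymptotic forms \eqref{eq:PnWeak} and \eqref{eq:TnWeak} (together with $R_N\sim\tfrac{z\sqrt{N}}{2}P_N$) into the bracket of \eqref{eq:MainResult}, after observing that $q=(1-\tau)t\sim a^2t/(2N)$. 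Whatever algebraic cancellations occur among the terms of that bracket happen already at the level of the leading-order $P$, $R$, $T$ asymptotics; no higher-order saddle expansion is needed. Your consistency check---that $\int_0^\infty{\cal P}_{weak}(z,t)\,dt$ must reproduce \eqref{eq:PWeakNonhden}---is right and is carried out in the Remark that follows the Corollary.
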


\begin{remark} After integration by parts one can rewrite the above as
\begin{equation}
{\cal P}_{weak}(z,t)=\frac{A}{2}\rho_{sc}(z)\frac{e^{-\frac{A}{2t}}}{t^2}\left[\left(\frac{2}{A}-\frac{1}{t}\right) e^{-\frac{A}{2}}+\left(1+\frac{1}{t}-\frac{2}{A}\right)\int_0^1 e^{-\frac{1}{2}As^2} ds\right].
\end{equation}
From this form it is easy to check that $\int_0^{\infty} \cP_{weak}(z,t)dt $ agrees with the mean density (\ref{eq:PWeakNonhden}), as expected.
\end{remark}

We thus conclude that the characteristic tail ${\cal P}_{weak}(z,t)\sim t^{-2}$ is the most robust feature of the condition number density for real eigenvalues of asymmetric matrices, as it survives in the regime of arbitrary weak asymmetry as long as $a>0$,

\begin{figure}
\includegraphics[width=0.5\textwidth]{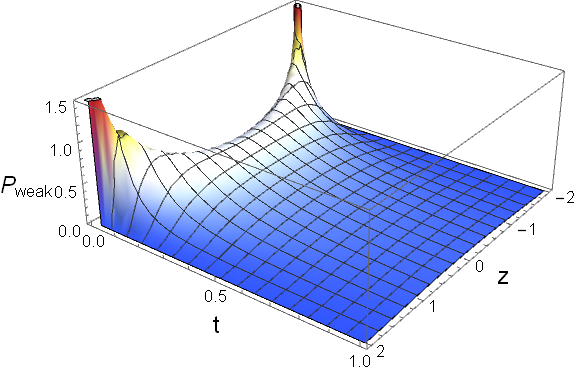} \includegraphics[width=0.5\textwidth]{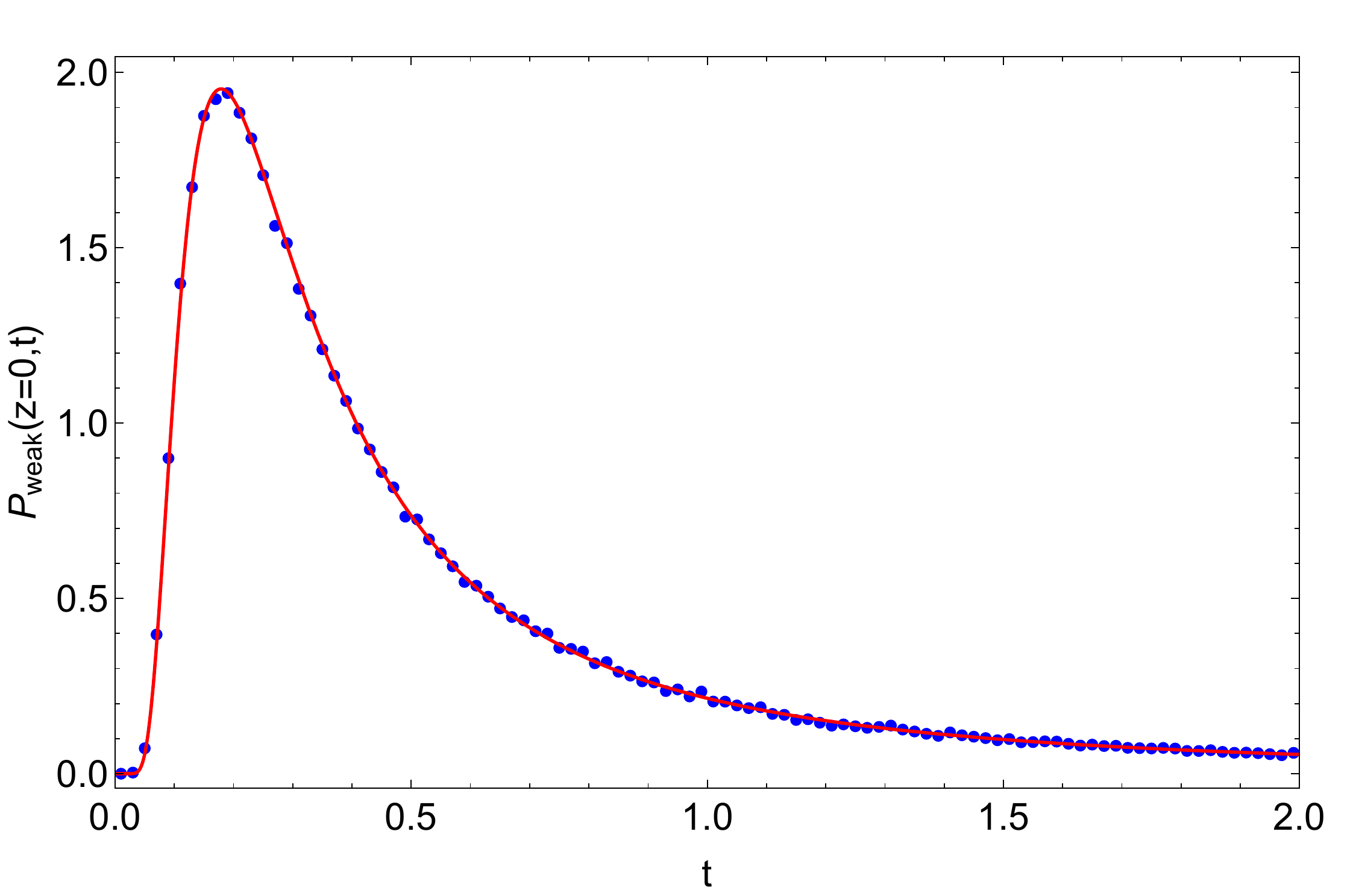}
\caption{
(left) 3D plot of $\mathcal{P}_{weak}(z,t)$. (right) Section of the plot at $z=0$ (red line) juxtaposed numerical diagonalization of $2\cdot 10^4$ matrices of size $N=500$.
}
\end{figure}

\textbf{Acknowledgments.} WT appreciates the support of Polish Ministry of Science and Higher Education  through the `Diamond Grant' 0225/DIA/2015/44 and the doctoral scholarship ETIUDA UMO-2018/28/T/ST1/00470 from National Science Center. WT is grateful to King's College London for warm hospitality during his stay. We are grateful to Janina Krzysiak for carefully reading this manuscript.

\section{Derivation of the main results}
We briefly outline an adaptation of the  method of evaluating the JDF in (\ref{eq:jointPdfDef}) following \cite{F2018} with necessary modifications.
\subsection{Partial Schur decomposition}
Let $\lambda$ be a real eigenvalue of an $N\times N$ real matrix $X_N$. Then it is well known, see e.g.\cite{EKS1994}, that the matrix $X_N$ can be decomposed as
\begin{equation}\label{decomp}
X_N=O\left(\begin{array}{cc}
\lambda & \bb{w}^T \\
0 & X_{N-1}
\end{array}\right)O^T=O\tilde{X}_NO^T,
\end{equation}
where $\bb{w}$ is a column vector with $N-1$ components and $X_{N-1}$ is a matrix of size $(N-1)\times (N-1)$. The matrix $O$ is the Householder reflection matrix, parameterized as $O=\idm_N - 2\bb{v}\bb{v}^T$, where $\idm_N$ is the $N\times N$ identity matrix and $\bb{v}$ is an $N$-dimensional vector of unit norm, with a positive first component.
Note that  although the left/right eigenvectors of $\tilde{X}_N$ corresponding to $\lambda$ are different from those of $X_{N}$, the inner products (hence, the eigenvalue condition numbers)  are the same. Parameterizing these eigenvectors as $\tilde{\bb{r}}_{\lambda}=(1,0,\ldots,0)^T$ and $\tilde{\bb{l}}_{\lambda}=(1,b_1,\ldots,b_{N-1})=(1,\mathbf{b}^T)$, we immediately obtain for the associated condition number $\kappa_{\lambda}^2=1+\bb{b}^T\bb{b}$. Demanding that
$\tilde{\bb{l}}_{\lambda}$ is the left eigenvector of $\tilde{X}_N$ leads us to the relation $\bb{b}=(\lambda-X_{N-1}^T)^{-1}\bb{w}$. As a consequence~\cite{F2018}
\begin{equation}
\kappa_{\lambda}^2=1+\bb{w}^T(\lambda-X_{N-1})^{-1}(\lambda-X_{N-1}^T)^{-1}\bb{w}.
\end{equation}
The Lebesgue measure on $X_{N}$ can be decomposed as $dX_{N}=|\det (\lambda-X_{N-1})| d\lambda d\bb{w} dX_{N-1} dS(\bb{v})$,
where $dS(\bb{v})$ denotes the uniform measure on the $(N-1)$-hemisphere~\cite{EKS1994} (see also supplementary material of~\cite{FK2016}).

It turns out to be more technically convenient to concentrate on evaluating a characteristic function $\cL(z,p)=\<\delta(z-\lambda) e^{-p\bb{b}^T\bb{b}}\>_{N}$ representing the Laplace transform of the JDF $\cP(z,t)$. Hereafter by $\<\ldots\>_{N}$ we denote the expected value with respect to the probability measure \eqref{eq:EllipticPDF} for matrices $X_N$ of size $N$.

\begin{lemma} \label{th:Laplace}
The characteristic function $\cL(z,p)$ can be represented in the form
\begin{equation}
\cL(z,p)=\frac{e^{-\frac{z^2}{2(1+\tau)}}}{2^{\frac{N}{2}}\Gmm{\frac{N}{2}}\sqrt{1+\tau}}\<\frac{\det(z-X)(z-X^T)}{\det{}^{1/2}\left[2p(1-\tau^2)+(z-X)(z-X^T)\right]}\>_{N-1}. \label{eq:Laplace}
\end{equation}
\end{lemma}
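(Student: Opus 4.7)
The plan is to evaluate $\cL(z,p)=\<\delta(z-\lambda)e^{-p\bb{b}^T\bb{b}}\>_N$ by carrying out the Schur change of variables \eqref{decomp} inside the expectation, performing the Gaussian integration over $\bb{w}$, and recognising what remains as an expectation $\<\cdot\>_{N-1}$. The four ingredients will be (i) orthogonal invariance of the trace to factor the elliptic weight, (ii) explicit integration over the Householder hemisphere, (iii) the $\bb{w}$-Gaussian together with a Sylvester-type determinant manipulation, and (iv) bookkeeping of the normalisation constants $C_N$ and $C_{N-1}$.

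First, using $O^TO=\idm_N$ and the block form of $\tilde X_N$ one verifies by direct computation that
\begin{equation*}
\tr(X_NX_N^T-\tau X_N^2)=(1-\tau)\lambda^2+\bb{w}^T\bb{w}+\tr(X_{N-1}X_{N-1}^T-\tau X_{N-1}^2),
\end{equation*}
so the weight in \eqref{eq:EllipticPDF} separates cleanly into an eigenvalue factor, a pure Gaussian in $\bb{w}$, and the elliptic weight of size $N-1$. Integrating $\delta(z-\lambda)$ sets $\lambda=z$ and produces the factor $\exp[-(1-\tau)z^2/(2(1-\tau^2))]=\exp[-z^2/(2(1+\tau))]$ appearing in \eqref{eq:Laplace}. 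The Householder parameter $\bb{v}$ runs over the closed hemisphere in $\mathbb{R}^N$ and integrates out to the constant $\pi^{N/2}/\Gmm{N/2}$.

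Next, setting $A:=z-X_{N-1}$ one has $\bb{b}=(A^T)^{-1}\bb{w}$, so the $\bb{w}$-integral is the Gaussian
\begin{equation*}
\int d\bb{w}\,\exp\!\left[-\frac{\bb{w}^T\bb{w}}{2(1-\tau^2)}-p\,\bb{w}^TA^{-1}(A^T)^{-1}\bb{w}\right]=\frac{(2\pi)^{(N-1)/2}(1-\tau^2)^{(N-1)/2}}{\det^{1/2}\!\left[\idm+2p(1-\tau^2)A^{-1}(A^T)^{-1}\right]}.
\end{equation*}
The factorisation $\idm+cA^{-1}(A^T)^{-1}=A^{-1}(A^T)^{-1}(A^TA+c\,\idm)$, combined with Sylvester's identity $\det(A^TA+c\idm)=\det(AA^T+c\idm)$, gives $\det[\idm+cA^{-1}(A^T)^{-1}]=\det(AA^T+c\idm)/(\det A)^2$. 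The factor $(\det A)^2=|\det(z-X_{N-1})|^2$ in the denominator cancels against the square of the Schur Jacobian $|\det(\lambda-X_{N-1})|$; one copy of $|\det(z-X_{N-1})|^2=\det[(z-X_{N-1})(z-X_{N-1}^T)]$ then assembles into the numerator of \eqref{eq:Laplace}, while the remaining determinant supplies its square-root denominator $\det^{1/2}[2p(1-\tau^2)\idm+(z-X_{N-1})(z-X_{N-1}^T)]$.

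Finally, the residual $X_{N-1}$-integral is turned into $\<\cdot\>_{N-1}$ by multiplying and dividing by $C_{N-1}$. The total prefactor collected from the Jacobian, the hemisphere volume, the $\bb{w}$-Gaussian, and the normalisation ratio reads
\begin{equation*}
\frac{C_{N-1}}{C_N}\cdot\frac{\pi^{N/2}}{\Gmm{N/2}}\cdot(2\pi)^{(N-1)/2}(1-\tau^2)^{(N-1)/2}.
\end{equation*}
Using $C_N=(2\pi)^{N^2/2}(1+\tau)^{N/2}(1-\tau^2)^{N(N-1)/4}$, the ratio $C_{N-1}/C_N$ evaluates to $(2\pi)^{1/2-N}(1+\tau)^{-1/2}(1-\tau^2)^{-(N-1)/2}$, and the three $\tau$-dependent powers together with the powers of $2\pi$ telescope to $[2^{N/2}\Gmm{N/2}\sqrt{1+\tau}]^{-1}$, reproducing exactly the prefactor in \eqref{eq:Laplace}. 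The two places leaving little margin for arithmetic error---hence the main obstacle---are precisely this last collapse of constants and the initial verification that the Schur change of variables correctly accounts for the summation over all real eigenvalues implicit in \eqref{eq:jointPdfDef} (so that no combinatorial factor is missed).
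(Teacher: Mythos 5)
Your proof is correct and follows essentially the same route as the paper: substitute the partial Schur decomposition and the associated Lebesgue-measure factorisation into the ensemble average, integrate out the Householder hemisphere to $\pi^{N/2}/\Gmm{N/2}$, perform the $\bb{w}$-Gaussian and rewrite the resulting determinant so that the Jacobian $|\det(z-X_{N-1})|$ assembles into $\det[(z-X)(z-X^T)]$, then collect the normalisation constants. The only differences from the paper are matters of detail — you verify the trace splitting explicitly and phrase the determinant step via Sylvester's identity — and the minor loose wording about the ``square of the Jacobian'' does not affect the argument.
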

\begin{proof}
Substituting the decomposition (\ref{decomp}) together with the associated decomposition of the Lebesgue measure into the probability measure of the elliptic ensemble \eqref{eq:EllipticPDF}
one can easily see that the ensemble average  in \eqref{eq:Laplace} amounts to performing the following integral:
 \begin{multline}
 \cL(z,p)=C_{N}^{-1} e^{-\frac{z^2}{2(1+\tau)}}\int \exp\left[-\frac{1}{2(1-\tau^2)}\tr (X_{N-1}X_{N-1}^T-\tau X_{N-1}^2)\right] \times
 \\
 \exp\left[-\frac{1}{2(1-\tau^2)} \bb{w}^T\left(1+2p(1-\tau^2)(z-X_{N-1})^{-1}(z-X^T_{N-1})^{-1}\right)\bb{w}\right]|z-X_{N-1}|dX_{N-1}d\bb{w} dS(\bb{v}).
 \end{multline}
 The integral over $\bb{v}$ yields  the volume of the $(N-1)$-hemisphere $\int dS(\bb{v}) =\frac{\pi^{N/2}}{\Gmm{\frac{N}{2}}}$~\cite{EKS1994}. The integral over $\bb{w}$ is Gaussian and can be easily performed, giving the factor
 \begin{equation}
\frac{ \left[2\pi(1-\tau^2)\right]^{\frac{N-1}{2}}}{\det{}^{1/2}[1+2p(1-\tau^2)(z-X_{N-1})^{-1}(z-X_{N-1}^{T})^{-1}]}=\frac{\left[2\pi(1-\tau^2)\right]^{\frac{N-1}{2}} \det{}^{1/2}(z-X_{N-1})(z-X_{N-1}^T)}{\det{}^{1/2}\left[(2p(1-\tau^2)+(z-X_{N-1})(z-X_{N-1}^T)\right]}.
 \end{equation}
 Taking all the multiplicative numerical constants into account and the factor $|\det(z-X_{N-1})|$ from the Jacobian, we arrive at \eqref{eq:Laplace}.
\end{proof}

\subsection{Ratio of determinants}
The problem has been therefore reduced to the calculation of the expectation for the ratio of two random determinants
\begin{equation}
D_{N}:=\<\frac{ \det(z-X)(z-X^T)}{\det{}^{1/2}\left[(2p(1-\tau^2)+(z-X)(z-X^T)\right]}\>_{N},
\end{equation}
which is evaluated as
\begin{theorem} \label{th:RatioDet}
\begin{multline}
D_{N}=\frac{2^{-\frac{N}{2}}}{\sqrt{1+\tau}\Gmm{\frac{N}{2}}} \int_0^{\infty}\frac{dt}{t} e^{-pt(1-\tau)} e^{-\frac{z^2t}{2(1+\tau)(1+t)}} \left(\frac{t}{1+t}\right)^{1/2}\left(\frac{t}{1+\tau+t}\right)^{\frac{N-1}{2}}\times
\\
\left[\frac{P_{N-1}(1+\tau-2z^2)+2z(R_{N-1}+\tau(N-1)R_{N-2})}{1+t}+
\frac{P_{N-1}z^2}{(1+t)^2}+
\frac{\tau^2(1+\tau)^2(N^2-1)P_{N-2}}{(1+\tau+t)^2}+
\right.
\\
\left. 
\frac{(1+\tau)(1-\tau^2)(N-1)[(N-1)P_{N-2}-T_{N-2}]}{1+\tau+t}
-\frac{2\tau(1+\tau)(N-1)zR_{N-2}}{(1+t)(1+\tau+t)} \right],
\label{eq:RatioDeterminants}
\end{multline}
where $P_N$, $R_N$ and $T_N$ are defined in \eqref{eq:PNdef}-\eqref{eq:TNdef}.
\end{theorem}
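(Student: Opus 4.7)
The plan is to evaluate $D_N$ by linearising the inverse-square-root denominator as a bosonic Gaussian integral, reducing the resulting vector integration to a one-dimensional radial one via the orthogonal invariance of the elliptic weight, separating the first row and column of $X$ from the bulk block $Y$, performing the first-row/column Gaussian integrations in closed form, and finally identifying the remaining $(N-1)$-dimensional $Y$-averages with linear combinations of the kernels $P_m,R_m,T_m$. Concretely, I would write
\[
\det{}^{-1/2}\!\left[2p(1-\tau^2)+(z-X)(z-X^T)\right]=(2\pi)^{-N/2}\int e^{-\frac{1}{2}\bb{y}^T[2p(1-\tau^2)+(z-X)(z-X^T)]\bb{y}}\,d\bb{y},
\]
and observe that both $\tr(XX^T)$ and $\tr(X^2)$ are individually invariant under $X\mapsto OXO^T$ for any orthogonal $O$, so the elliptic measure \eqref{eq:EllipticPDF} is invariant as well. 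I may therefore rotate so that $O\bb{y}=r\bb{e}_1$; the angular integration contributes $2\pi^{N/2}/\Gmm{N/2}$ and collapses the quadratic form in the exponent to $r^2 M_{11}$ with $M_{11}=(z-X_{11})^2+\sum_{j\ge 2}X_{1j}^2$, which depends only on the first row of $X$.

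Next I would decompose $X=\bigl(\begin{smallmatrix}X_{11}&\bb{u}^T\\\bb{v}&Y\end{smallmatrix}\bigr)$ and use the Schur-type identity $\det(z-X)=(z-X_{11})\det(z-Y)-\bb{u}^T\mathrm{adj}(z-Y)\bb{v}$, which splits $\det{}^2(z-X)$ into three pieces. Under the elliptic law, $X_{11}\sim\mathcal{N}(0,1+\tau)$, the pairs $(u_j,v_j)$ are i.i.d.\ bivariate Gaussian with unit variances and covariance $\tau$, and $Y$ itself is elliptic of size $N-1$. The tilting factor $e^{-r^2M_{11}/2}$ touches only $X_{11}$ and $\bb{u}$: completing the square in $X_{11}$ gives the prefactor $\sqrt{2\pi(1+\tau)/(1+(1+\tau)r^2)}\,\exp[-z^2r^2/(2(1+(1+\tau)r^2))]$ together with shifted mean $z/(1+(1+\tau)r^2)$ for $z-X_{11}$, while Sherman--Morrison applied to each $(u_j,v_j)$ pair produces tilted covariances
\[
c_{uu}=\frac{1}{1+r^2},\qquad c_{uv}=\frac{\tau}{1+r^2},\qquad c_{vv}=\frac{1+r^2(1-\tau^2)}{1+r^2},
\]
and a $(\bb{u},\bb{v})$-Jacobian $[2\pi(1-\tau^2)/(1+r^2)]^{(N-1)/2}$. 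Substituting $r^2=t/(1+\tau)$ then converts $1+(1+\tau)r^2$ into $1+t$ and $1+r^2$ into $(1+\tau+t)/(1+\tau)$, which together with the radial measure $r^{N-1}dr$ reproduces the exponential $e^{-p(1-\tau)t-z^2t/(2(1+\tau)(1+t))}$ and the weight $t^{N/2-1}dt/[(1+t)^{1/2}(1+\tau+t)^{(N-1)/2}]$ of~\eqref{eq:RatioDeterminants}.

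What survives after these integrations is a linear combination, with coefficients built from $c_{uu},c_{uv},c_{vv}$ and the shifted moments of $z-X_{11}$, of five $(N-1)$-dim elliptic $Y$-averages: $\<\det{}^2(z-Y)\>$, $\<\det{}^2(z-Y)\,\tr(z-Y)^{-1}\>$, $\<\det{}^2(z-Y)\,[\tr(z-Y)^{-1}]^2\>$, $\<\det{}^2(z-Y)\,\tr(z-Y)^{-2}\>$ and $\<\det{}^2(z-Y)\,\tr[(z-Y)^{-1}(z-Y^T)^{-1}]\>$, the latter four arising from Wick contractions of $\bb{u}^T\mathrm{adj}(z-Y)\bb{v}$ and its square. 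The main technical obstacle is the closed-form evaluation of these five averages and their matching onto linear combinations of $P_m(z),R_m(z),T_m(z)$ for $m\in\{N-1,N-2\}$: for $\<\det{}^2(z-Y)\>$ this follows from the skew-orthogonal Hermite-polynomial structure of the real elliptic ensemble and packages exactly into $P_{N-1}(z)$, while the resolvent-weighted averages are obtained by introducing Grassmann source fields into the determinant and differentiating, after which repeated use of the Hermite recurrence $z\He_k(z/\sqrt{\tau})=\sqrt{\tau}\,\He_{k+1}(z/\sqrt{\tau})+k\sqrt{\tau}\,\He_{k-1}(z/\sqrt{\tau})$ folds the resulting double sums back into $P_m,R_m,T_m$ with the shifted indices visible in~\eqref{eq:RatioDeterminants}. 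Collecting all contributions term by term and rearranging according to the $(1+t)$- and $(1+\tau+t)$-denominators yields the bracketed expression in the statement.
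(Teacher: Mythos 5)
Your framework is genuinely different from the paper's: you linearise the $N\times N$ denominator with a single real vector $\bb{y}$, exploit orthogonal invariance of the elliptic weight to rotate $\bb{y}\to r\bb{e}_1$, and then Schur-decompose $X$ into its $(1,1)$-entry, first row/column and bulk $Y$; the paper instead bosonizes the $2N\times 2N$ block matrix with two vectors $\bb{s}_1,\bb{s}_2$, fermionizes the numerator at the outset, applies Hubbard--Stratonovich transformations, evaluates a $4N\times 4N$ Pfaffian via the rank-two structure of $A=\bb{s}_1\bb{s}_2^T+\tau\bb{s}_2\bb{s}_1^T$, and parameterizes the $\bb{s}$-integrals by the Gram matrix. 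The outer structure of~\eqref{eq:RatioDeterminants} is correctly recovered by your route: the angular volume $2\pi^{N/2}/\Gamma(N/2)$, the tilted $X_{11}$ Gaussian producing $\exp[-z^2t/(2(1+\tau)(1+t))]/\sqrt{1+t}$ with mean $z/(1+t)$, the $(N-1)$ Sherman--Morrison-tilted pairs producing $(1+r^2)^{-(N-1)/2}$ and the covariances $c_{uu},c_{uv},c_{vv}$ you state, the substitution $r^2=t/(1+\tau)$ giving the $t$-weight, and the $p$-dependence collapsing to $e^{-p(1-\tau)t}$. This part of the proposal is verified and sound.

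However there is a genuine gap in the bracketed part, and it sits exactly where the hard work of the theorem lives. After your Wick contractions, you are left with five elliptic $(N-1)$-averages. The first, $\langle\det^2(z-Y)\rangle_{N-1}=P_{N-1}(z)$, and its $z$-derivatives (producing $R_{N-2}$, hence $R_{N-1}$ by the recurrence of Appendix~\ref{sec:App1}) are indeed standard. Separating $\langle\det^2[\tr(z-Y)^{-1}]^2\rangle$ from $\langle\det^2\tr(z-Y)^{-2}\rangle$ already requires the two-argument object $\langle\det(z-Y)\det(w-Y)\rangle$, not a single derivative of $P_{N-1}$. But the decisive average, $\langle\det^2(z-Y)\,\tr[(z-Y)^{-1}(z-Y^T)^{-1}]\rangle_{N-1}$, is a non-normal quantity that does not depend on the eigenvalues of $Y$ alone and is precisely the type of eigenvector-overlap expectation the entire paper exists to compute; it cannot be reached by $z$-differentiation, by the skew-orthogonal polynomial structure, or by any ``Grassmann source field'' trick that merely differentiates a determinant. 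Identifying it with the combination of $P_{N-2}$ and the genuinely new kernel $T_{N-2}$ (whose definition~\eqref{eq:TNdef} carries the extra index weight $k$ and is not in the prior literature) is not a lookup but a calculation of essentially the same type and complexity as the paper's Pfaffian evaluation. As written, the claim that the five averages ``fold back'' into $P_m,R_m,T_m$ via the Hermite three-term recurrence is asserted, not proved, and hides the heart of Theorem~\ref{th:RatioDet}. Your route may well be made rigorous, but it should be understood as relocating the supersymmetric computation to the $Y$-averages rather than eliminating it.
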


\begin{remark} Theorem \ref{th:RatioDet}  immediately implies our main statement, Theorem \ref{th:Main}:
Indeed, by inserting \eqref{eq:RatioDeterminants} into \eqref{eq:Laplace} we see that $\cL(z,p)$ is already represented as a Laplace transform, but of the rescaled variable $t(1-\tau)$, and \eqref{eq:MainResult} follows. 
\end{remark}
\vspace{1cm}

The proof of Theorem \ref{th:RatioDet} proceeds via employing the supersymmetry approach to ratios of determinants.

\begin{proof}
Let $\chi, \rho, \theta, \eta$ denote $N$-component vectors in anticommuting (Grassmann) variables. This allows us to rewrite the determinant in the numerator as a standard Berezin Gaussian integral
\begin{equation}
\det (z-X)(z-X^T)=\int d\chi d\rho d\theta d\eta \exp\left[-\chi^T(z-X)\eta -\theta^T(z-X^T)\rho\right].
\end{equation}
The inverse square root of the determinant of a symmetric positive definite matrix $A$ can be represented as a standard Gaussian integral. %with the help of set of vectors  variables
%$\begin{equation}
%\frac{1}{\sqrt{\det A}}=\frac{1}{(2\pi)^{\frac{N}{2}}} \int dS e^{-\frac{1}{2}S^TAS}
%\end{equation}
Namely, introducing $N$-component real vectors $\bb{s}_1,\bb{s}_2$ we can write
\begin{multline}
\det{}^{-\frac{1}{2}}\left[2p(1-\tau^2)+(z-X)(z-X^T)\right]=
\\
\frac{1}{(2\pi)^{N}}\int d\bb{s}_1 d\bb{s}_2 \exp\left[-\frac{1}{2}(\bb{s}_1^T \bb{s}_2^T)\left(\begin{array}{cc}
u & i(z-X) \\
i(z-X^T) & u
\end{array}\right)
\left(\begin{array}{c}
\bb{s}_1 \\
\bb{s}_2
\end{array}\right)
\right],
\end{multline}
where we denoted $u^2=2p(1-\tau^2)$. This provides a representation of the right-hand side in \eqref{eq:RatioDeterminants} in the form
\begin{multline}
D_{N}=\frac{1}{(2\pi)^N}\int d\chi d\rho d\theta d\eta dS_1 dS_2 \exp\left[-z(\chi^T\eta+\theta^T\rho)-\frac{1}{2}(u\bb{s}_1^T\bb{s}_1+u \bb{s}_2^T\bb{s}_2+2i z \bb{s}_1^T\bb{s}_2)\right]\times
\\
 \< e^{\tr X(\theta\rho^T-\eta\chi^T+i\bb{s}_2\bb{s}_1^T)}\>_N. \label{eq:DNfirst}
\end{multline}
The identity $\<e^{-\tr XA}\>_N=e^{\frac{1}{2}\tr(AA^T+\tau A^2)}$ allows us to perform the ensemble average. This in turn produces terms that are quartic in Grassmann variables, which we further bilinearize by employing a few auxilliary Gaussian integrals, the step known as the Hubbard-Stratonovich transformation:
\begin{eqnarray}
e^{\theta^T\eta \rho^T\chi} &= & \frac{1}{\pi}\int_{\mathbb{C}} d^2a e^{-|a|^2+a\theta^T\eta+\bar{a}\rho^T\chi}, \qquad
e^{\tau\chi^T\theta\rho^T\eta}  =   \frac{1}{\pi} \int_{\mathbb{C}} d^2 b e^{-|b|^2+\sqrt{\tau}b\chi^T\theta+\sqrt{\tau}\bar{b}\rho^T\eta},
\\
e^{-\frac{\tau}{2}(\rho^T\theta)^2} & = & \frac{1}{\sqrt{2\pi}} \int_{\mathbb{R}} e^{-\frac{c^2}{2}+ic\sqrt{\tau}\rho^T\theta}dc, \qquad
e^{-\frac{\tau}{2}(\chi^T\eta)^2}  =  \frac{1}{\sqrt{2\pi}} \int_{\mathbb{R}} e^{-\frac{f^2}{2}-if\sqrt{\tau}\chi^T\eta}df,
\end{eqnarray}
where we use the notation $d^2z=dx\, dy$ for $z=x+iy$.

Applying these transformations converts all integrations over anticommuting variables into a Gaussian Berezin integral which we can write as  $\int  d\chi d\rho d\theta d\eta e^{-\frac{1}{2}\xi^T M \xi}$, where $\xi^T=(\chi^T\eta^T\theta^T \rho^T)$ and the antisymmetric matrix $M$ is given by
\begin{equation} \label{eq:Mdef}
M=\left(\begin{array}{cccc}
0 & g-iA & -b\sqrt{\tau} & \bar{a} \\
-g+iA^T & 0 & a & \bar{b}\sqrt{\tau} \\
b\sqrt{\tau} & -a & 0 & h-i A^T \\
-\bar{a} & -\bar{b}\sqrt{\tau} & -h+iA & 0
\end{array}\right).
\end{equation}
Here  we denoted $g=z+if\sqrt{\tau}$, $h=z+ic\sqrt{\tau}$ for brevity, and introduced the rank-two matrix $A=\bb{s}_1\bb{s}_2^T+\tau \bb{s}_2 \bb{s}_1^T$.

The Berezin Gaussian integration yields a Pfaffian of the matrix $M$. The Pfaffian is evaluated using the fact that through the Schur decomposition $A$ is equivalent to $\tilde{A}\oplus 0_{N-2}$, where $0_{N-2}$ denotes the matrix of size $N-2$ filled with zeros, and
\begin{equation}
\tilde{A} = \begin{pmatrix}
\lambda_1 & t \\
0 & \lambda_2
\end{pmatrix}.
\end{equation}
As a consequence, $\textup{Pf}(M) = \textup{Pf}(F)^{N-2} \textup{Pf}(G)$, where the matrices of size $4$ and $8$ read explicitly
%\begin{equation}
%F=\left(\begin{array}{cccc}
%0_2 & g-i\tilde{A} & -b\sqrt{\tau} & \bar{a} \\
%-g+i \tilde{A}^T & 0_2 & a & \bar{b}\sqrt{\tau} \\
%b\sqrt{\tau} & -a & 0_2 & h-i \tilde{A}^T \\
%-\bar{a} & -\bar{b}\sqrt{\tau} & -h+i \tilde{A} & 0
%\end{array}\right), \qquad
%G = \left(\begin{array}{cccc}
%0 & g & -b\sqrt{\tau} & \bar{a} \\
%-g & 0 & a & \bar{b}\sqrt{\tau} \\
%b\sqrt{\tau} & -a & 0 & h \\
%-\bar{a} & -\bar{b}\sqrt{\tau} & -h & 0
%\end{array}\right)
%\end{equation}

\begin{equation}
F = \left(\begin{array}{cccc}
0 & g & -b\sqrt{\tau} & \bar{a} \\
-g & 0 & a & \bar{b}\sqrt{\tau} \\
b\sqrt{\tau} & -a & 0 & h \\
-\bar{a} & -\bar{b}\sqrt{\tau} & -h & 0
\end{array}\right),
\qquad
G=F\otimes \idm_2 + \left(\begin{array}{cccc}
0 & - i\tilde{A} & 0 & 0 \\
i \tilde{A}^T & 0 & 0 & 0 \\
0 & 0 & 0 & i \tilde{A}^T \\
0 & 0 & -i \tilde{A} & 0 
\end{array}\right).
\end{equation}
The Pfaffian reads
\begin{multline} \label{eq:pfaffianM}
\mathrm{Pf}(M)=(|a|^2+\tau |b|^2+gh)^{N-2}\left[ (|a|^2+\tau |b|^2+gh)^{2}-(|a|^2+\tau |b|^2+gh)i(g+h)\tr A- \right.
\\
\left. |a|^2 \tr AA^T-\tau|b|^2\tr A^2-(g^2+h^2)\det A- gh(\tr A)^2-i (g+h) \det A \,  \tr A+\det{}^2A\right].
\end{multline}
We notice that
\begin{displaymath} 
\begin{array}{cclccl}
\tr A &=& (1+\tau) \bb{s}_1^T\bb{s}_2,&\quad \tr A^2 &=& (1+\tau^2)(\bb{s}_1^T\bb{s}_2)^2+2\tau (\bb{s}_1^T\bb{s}_1)(\bb{s}_2^T\bb{s}_2), \\
\tr AA^T &=& (1+\tau^2)(\bb{s}_1^T\bb{s}_1)(\bb{s}_2^T\bb{s}_2) + 2\tau (\bb{s}_1^T\bb{s}_2)^2, & \quad \det A &=& \tau (\bb{s}_1^T\bb{s}_2)^2-\tau(\bb{s}_1^T\bb{s}_1)(\bb{s}_2^T\bb{s}_2).
\end{array}
\end{displaymath}
 Therefore, the resulting expression depends on the vectors $\bb{s}_1$ and $\bb{s}_2$ only via their scalar products, so it is convenient to parameterize integrals by the entries of the associated Gram matrix~\cite[Section 2]{F2002}\cite[Theorem 1a in Appendix D]{FS2002}
\begin{eqnarray}
\hat{Q}=\begin{pmatrix}
Q_1 & Q \\
Q & Q_2
\end{pmatrix},
\qquad
\hat{Q}_{ij}=(\bb{s}^T_i\bb{s}_j), \qquad i,j=1,2. \label{eq:changeQ}
\end{eqnarray}

The Jacobian of this change of variables is $(\det \hat{Q})^{\frac{N-3}{2}}$, while the integration over redundant angular variables yields the factor $C_{N,2}^{(o)} =\frac{2^{N-2}\pi^{N-1}}{\Gmm{N-1}}$~\cite{FS2002}. The range of integration is restricted by the non-negativity conditions $Q_1\ge 0,\,Q_2\ge 0, \det\hat{Q}=Q_1Q_2-Q^2\ge 0$. Following~\cite{F2018} it is convenient to change variables into $r=(\det \hat{Q})^{1/2}$ and parameterize the integration region by $Q_2=\frac{r^2+Q^2}{Q_1}$. The change of measure reads $dQ_1 dQ_2 dQ=2\frac{dQ_1}{Q_1}r dr dQ$. After rescaling $Q_1\to uQ_1$, we have
\begin{multline}
D_{N}=\frac{1}{4\pi^4\Gmm{N-1}}\int_{\mathbb{C}} d^2a\int_{\mathbb{C}}d^2b\, e^{-|a|^2-|b|^2} \int_{\mathbb{R}^2}dc\,df\, e^{-\frac{c^2}{2}-\frac{f^2}{2}}
\\
\times\,\int_{-\infty}^{\infty}dQ\int_{0}^{\infty}\frac{dQ_1}{Q_1} \int_0^{\infty}r^{N-2}dr  \,  e^{-\frac{1}{2}\left(u^2Q_1+\frac{r^2+Q^2}{Q_1}+2izQ+r^2+Q^2(1+\tau)\right)} \rm{Pf}(M). \label{eq:Int1}
\end{multline}
After the change of variables \eqref{eq:changeQ}, $\mathrm{Pf} (M)$ can be expressed as
\begin{multline}
%\mathrm{Pf}(M)=
(|a|^2+\tau|b|^2+gh)^{N-2}\left[(|a|^2+\tau|b|^2+gh)^2-(|a|^2+\tau|b|^2+gh)(iQ(1+\tau)(g+h)+Q^2(1+\tau)^2)-
\right.
\\
\left.
-|a|^2(1+\tau^2)r^2-2\tau^2r^2|b|^2+\tau r^2(g^2+h^2)-i\tau(g+h)Q(1+\tau)r^2+\tau^2 r^4\right].
\end{multline}
The integrals over $a,b,c,f$ are performed in the following way. Let us denote
\begin{equation}
P_N=\frac{1}{2\pi^3}\int d^2a d^2b dc df e^{-|a|^2-|b|^2-\frac{c^2}{2}-\frac{f^2}{2}} (|a|^2+\tau |b|^2+gh)^{N}.
\end{equation}
Expanding the expression in the bracket and using the binomial theorem twice, we obtain
%\begin{equation}
%P_N=\frac{1}{2\pi^3}\sum_{k=0}^{N}\sum_{m=0}^{k}\frac{N! \tau^{m-k}}{(N-k)! (k-m)! m!}\int e^{-|a|^2}|a|^{2(N-k)} d^2a \int e^{-|b|^2}|b|^{2(k-m)} d^2b \int e^{-\frac{c^2+f^2}{2}}(z-ic\sqrt{\tau})^k(z-if\sqrt{\tau})^k
%\end{equation}
%Evaluation of the integrals yields
\begin{equation}
P_{N}=N!\sum_{k=0}^{N}\tau^k\sum_{m=0}^{k}\frac{1}{m!}\He_{m}^2\left(\frac{z}{\sqrt{\tau}}\right),
\end{equation}
where $\He_{m}(x)=(2\pi)^{-1/2}\int_{-\infty}^{\infty} e^{-\frac{y^2}{2}}(x+iy)^m dy$ are the monic Hermite polynomials. The internal sum can be performed via the Cristoffel-Darboux formula, finally yielding
\begin{equation}
P_N=N!\sum_{k=0}^{N}\frac{\tau^k}{k!}\left( (k+1)\He_k^2\left(\frac{z}{\sqrt{\tau}}\right)-k \He_{k-1}\left(\frac{z}{\sqrt{\tau}}\right)\He_{k+1}\left(\frac{z}{\sqrt{\tau}}\right) \right). \label{eq:Pdef2}
\end{equation}
Note that $P_{N}$ can be interpreted as the expectation of the squared characteristic polynomial $\<\det(z-X)(z-X^T)\>_{N}$ and in this capacity has been already studied for the Real Elliptic ensemble~\cite{APS2008}. All other integrals over $a, b, c, f$ in \eqref{eq:Int1} are performed in a similar way. After exploiting the three term recurrence for Hermite polynomials $\He_{N+1}(x)=x\He_{N}(x)-N\He_{N-1}(x)$,
%Defining also
%\begin{eqnarray}
%R_N:=&\<(|a|^2+\tau|b|^2+gh)^N g\>_{abcf}=\frac{1}{2(N+1)}\frac{d P_{N+1}}{dz} \\
%T_N:=&
%\end{eqnarray}
the integrals are evaluated to
\begin{multline}
P_{N}-P_{N-1}Q^2(1+\tau)^2+r^4\tau^2P_{N-2}-r^2[(N-1)(1+\tau^2)+4\tau^2]P_{N-2}-2iQ(1+\tau)R_{N-1}+
\\
2r^2\tau(z-iQ(1+\tau))R_{N-2}+(1-\tau^2)r^2T_{N-2}, \label{eq:abcgIntegrated}
\end{multline}
where $R_N$ and $T_N$ are defined by \eqref{eq:RNdef} and \eqref{eq:TNdef}. Note also that $R_{N}(z)=\frac{1}{2(N+1)}\frac{dP_{N+1}(z)}{dz}$.
It is convenient to exploit the structure of \eqref{eq:abcgIntegrated} and exponent in \eqref{eq:Int1} and rescale further $Q\to\frac{Q}{1+\tau}$ and, similarly, $Q_1\to\frac{Q_1}{1+\tau}$. Recalling that $u^2=2p(1-\tau^2)$, one then arrives at
\begin{multline}
D_{N}=\frac{1}{2\pi(1+\tau)\Gmm{N-1}}\int_0^{\infty}\frac{dQ_1}{Q_1}e^{-pQ_1(1-\tau)}\int_{\mathbb{R}}dQ e^{-\frac{1}{2(1+\tau)}\left(Q^2\frac{1+Q_1}{Q_1}+2izQ\right)}
\\
\int_0^{\infty}r^{N-2} dr  e^{-\frac{r^2}{2}\frac{Q_1+1+\tau}{Q_1}}   \left[ P_N-P_{N-1}Q^2+r^4\tau^2 P_{N-2}-r^2[(N-1)(1+\tau^2)+4\tau^2]P_{N-2} - \right.
\\
\left. 2iQR_{N-1}+2r^2\tau(z-iQ)R_{N-2}+
(1-\tau^2)r^2T_{N-2}\right].
\end{multline}
The integrations over $Q$ and $r$ are elementary. The integral over $Q$ is Gaussian, while the one over $r$ is of the type $\int_0^{\infty}r^{N-2} e^{-ar^2/2}dr=\frac{1}{2}\left(\frac{2}{a}\right)^{\frac{N-1}{2}}\Gmm{\frac{N-1}{2}}$.  
The remaining integral over $Q_1$ formally looks logarithmically divergent. To see the cancellation of the divergent part, one should exploit a non-trivial identity
\begin{equation}
P_N-P_{N-1}(1+\tau-z^2)-(N-1)[2\tau^2+N-1]P_{N-2}-2zR_{N-1}+(1-\tau^2)(N-1)T_{N-2}=0, \label{eq:Identity}
\end{equation}
which is verified in Appendix~\ref{sec:App1}.  After elementary but tedious algebraic manipulations with the help of \textit{Mathematica} in rearranging terms, we finally obtain
\begin{multline}
D_N=\frac{2^{-\frac{N}{2}}}{\sqrt{1+\tau}\Gmm{\frac{N}{2}}}\int_0^{\infty}\frac{dQ_1}{Q_1} e^{-pQ_1(1-\tau)} e^{-\frac{z^2Q_1}{2(1+\tau)(1+Q_1)}}\left(\frac{Q_1}{1+Q_1}\right)^{\frac{1}{2}} \left(\frac{Q_1}{1+\tau+Q_1}\right)^{\frac{N-1}{2}}\times
\\
\left[ \frac{P_{N-1}(1+\tau-2z^2)+2z(R_{N-1}+\tau(N-1)R_{N-2})}{1+Q_1}+\frac{P_{N-1}z^2}{(1+Q_1)^2}+\frac{\tau^2(1+\tau)^2(N^2-1)P_{N-2}}{(1+\tau+Q_1)^2}+  \right.
\\
\left.
\frac{(1+\tau)(1-\tau^2)(N-1)[(N-1)P_{N-2}-T_{N-2}]}{1+\tau+Q_1}-\frac{2\tau(1+\tau)(N-1)zR_{N-2}}{(1+Q_1)(1+\tau+Q_1)}
\right]. \label{eq:Dn}
\end{multline}
\end{proof}

\subsection{Asymptotic analysis}
As $P_N$, $R_N$ and $T_N$ are the building blocks of the determinant, we consider here their large-$N$ asymptotics.
First, we find convenient integral representations, which should allow the use of the Laplace method. For this we
start from \eqref{eq:Pdef2} and, using the integral representation for Hermite polynomials in (\ref{eq:Hermite}) with both signs, we obtain
\begin{equation}
P_N=\frac{N!}{2\pi\tau}e^{\frac{z^2}{\tau}}\sum_{k=0}^{N}\frac{1}{k!} \int_{\mathbb{R}^2} dt ds e^{-\frac{t^2+s^2}{2\tau}-\frac{iz}{\tau}(t-s)} [(k+1)t^ks^k+k t^{k+1} s^{k-1}].
\end{equation}
The sum is evaluated using $\sum_{k=0}^{N}\frac{x^k}{k!}=e^{x}\frac{\Gmm{N+1,x}}{\Gmm{N+1}}$, where $\Gmm{N+1,x}=\int_x^{\infty}u^{N}e^{-u}du$. This yields
\begin{lemma}
\begin{align}
P_{N}(z)=\frac{N!}{2\pi\tau} e^{\frac{z^2}{\tau}}\int_{\mathbb{R}^2} dt ds e^{-\frac{t^2+s^2}{2\tau}-\frac{iz}{\tau}(t-s)+ts}\left(\frac{\Gmm{N+1,ts}}{N!}+t(s+t)\frac{\Gmm{N,ts}}{(N-1)!}\right). \label{eq:PnIntegral}
\end{align}
\end{lemma}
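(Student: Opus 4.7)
The plan is to unpack each Hermite polynomial in the definition \eqref{eq:Pdef2} of $P_N(z)$ via the integral representation \eqref{eq:Hermite}, collapse the finite sum over $k$ using the standard identity $\sum_{k=0}^{N} x^k/k! = e^{x}\Gmm{N+1,x}/N!$, and then recognize the result as the claimed double integral.

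First, I would substitute the representation \eqref{eq:Hermite} into each factor $\He_m(z/\sqrt{\tau})$, choosing the $+i$ branch for one factor and the $-i$ branch for the other so that the linear phases combine into $e^{-iz(t-s)/\tau}$ after rescaling. Concretely, in $\He_k^2(z/\sqrt{\tau})$ I use variable $t$ with sign $+i$ and variable $s$ with sign $-i$; in $\He_{k-1}(z/\sqrt{\tau})\He_{k+1}(z/\sqrt{\tau})$ I assign $-i$ to $\He_{k-1}$ (variable $s$) and $+i$ to $\He_{k+1}$ (variable $t$). After rescaling $t\mapsto t/\sqrt{\tau}$, $s\mapsto s/\sqrt{\tau}$, the overall $\tau$-dependence becomes $\tau^{-k}\cdot\tau^{-1}/(2\pi)$, which precisely cancels the weight $\tau^{k}/k!$ in \eqref{eq:Pdef2}. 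The prefactor phase for the diagonal term is $(i)^k(-i)^k=1$, while for the cross term $(i)^{k+1}(-i)^{k-1}=-1$; combined with the minus sign in $-k\He_{k-1}\He_{k+1}$, the two contributions add constructively as
\begin{equation*}
(k+1)\He_k^2 - k\He_{k-1}\He_{k+1}
\;\longleftrightarrow\;
(k+1)\,t^k s^k + k\,t^{k+1}s^{k-1},
\end{equation*}
yielding the intermediate formula displayed just above the lemma statement in the main text.

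Next, I would perform the summation over $k$ term by term. Splitting $(k+1)=1+k$ gives
\begin{equation*}
\sum_{k=0}^{N}\frac{(k+1)(ts)^k}{k!} = \sum_{k=0}^{N}\frac{(ts)^k}{k!} + ts\sum_{k=0}^{N-1}\frac{(ts)^k}{k!},
\end{equation*}
and shifting the index in the $t^{k+1}s^{k-1}$ piece produces $t^2\sum_{k=0}^{N-1}(ts)^k/k!$. Applying the incomplete-gamma identity to each finite sum factors out $e^{ts}$ and leaves exactly $\Gmm{N+1,ts}/N!+t(s+t)\Gmm{N,ts}/(N-1)!$ inside the brackets. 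Reinstating the common prefactor $(N!/2\pi\tau)e^{z^2/\tau}$ and the Gaussian weight $e^{-(t^2+s^2)/(2\tau)-iz(t-s)/\tau}$ then gives the stated integral representation.

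The only nontrivial point is the sign bookkeeping in the first step: one has to verify that the choice of branches makes both the quadratic $\He_k^2$ and the mixed $\He_{k-1}\He_{k+1}$ contributions share the \emph{same} Gaussian weight $\exp[-(t^2+s^2)/(2\tau)-iz(t-s)/\tau]$ so that their polynomial integrands can be added under one integral sign, and that the alternating factor $(\pm i)^m(\mp i)^{m'}$ produces the correct relative sign to merge with the $-k$ coefficient. Everything else is elementary.
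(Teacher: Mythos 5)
Your proposal is correct and follows essentially the same route as the paper: insert the double-sided integral representation \eqref{eq:Hermite} of the Hermite polynomials into the defining sum \eqref{eq:Pdef2}, rescale to absorb the $\tau^k$ weight and align the Gaussian exponents, and then collapse the resulting $k$-sum with $\sum_{k=0}^{N}x^k/k!=e^x\,\Gmm{N+1,x}/N!$. The sign bookkeeping you single out — one branch $+i$, one branch $-i$, giving $(i)^k(-i)^k=1$ for the diagonal term and $(i)^{k+1}(-i)^{k-1}=-1$ for the cross term, which flips the $-k$ coefficient to $+k$ — is exactly what makes the two pieces combine under a single Gaussian weight, and your accounting of it is right.
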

An analogous procedure applied to $T_N$ gives
\begin{lemma}
\begin{equation}
T_{N}=\frac{N!}{2\pi \tau}e^{\frac{z^2}{\tau}}\int_{\mathbb{R}^2}dt ds e^{-\frac{t^2+s^2}{2\tau}-\frac{iz}{\tau}(t-s)+ts}\left((t^2+2ts)\frac{\Gmm{N,ts}}{(N-1)!}+t^2s(t+s)\frac{\Gmm{N-1,ts}}{(N-2)!}\right). \label{eq:TnIntegral}
\end{equation}
\end{lemma}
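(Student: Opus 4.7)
The plan is to mirror the derivation of \eqref{eq:PnIntegral} for $P_N$, carrying through the extra factor of $k$ that distinguishes \eqref{eq:TNdef} from \eqref{eq:PNdef}.

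First I would apply the integral representation \eqref{eq:Hermite} to each Hermite factor, using the $+i$ branch on one factor and the $-i$ branch on the other so that the resulting double integral has a real Gaussian measure with an imaginary linear shift. After rescaling the dummy variables by $\sqrt{\tau}$ to absorb the argument $z/\sqrt{\tau}$, this yields
\begin{equation*}
\He_k^2\!\left(\tfrac{z}{\sqrt{\tau}}\right)=\frac{e^{z^2/\tau}}{2\pi\,\tau^{k+1}}\int_{\mathbb{R}^2}(ts)^{k}\,e^{-\frac{t^2+s^2}{2\tau}-\frac{iz(t-s)}{\tau}}\,dt\,ds,
\end{equation*}
and an analogous expression for $\He_{k-1}\He_{k+1}$ with integrand $t^{k-1}s^{k+1}$ multiplied by the overall sign factor $(+i)^{k-1}(-i)^{k+1}=-1$, which cancels the explicit minus sign in \eqref{eq:TNdef}. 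The factors $\tau^{k+1}$ in the denominator cancel $\tau^{k}$ from the definition, producing the common prefactor $N!\,e^{z^2/\tau}/(2\pi\tau)$ and reducing the task to evaluating, under the common double integral, the finite sum
\begin{equation*}
\Sigma_N(t,s)=\sum_{k=0}^{N}\frac{k\bigl[(k+1)(ts)^{k}+k\,t^{k+1}s^{k-1}\bigr]}{k!}.
\end{equation*}

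Next I would split $k(k+1)=2k+k(k-1)$ and $k\cdot k=k+k(k-1)$, shift the summation index accordingly, and apply the finite exponential sum identity $\sum_{k=0}^{N}x^{k}/k!=e^{x}\,\Gmm{N+1,x}/N!$ to each resulting piece. The singly-shifted pieces produce $e^{ts}\Gmm{N,ts}/(N-1)!$ multiplied by $2ts$ and $t^{2}$ respectively; the doubly-shifted pieces produce $e^{ts}\Gmm{N-1,ts}/(N-2)!$ multiplied by $(ts)^{2}$ and $t^{3}s$. Collecting terms gives exactly the coefficient $t^{2}+2ts$ of $\Gmm{N,ts}/(N-1)!$ and $t^{2}s(t+s)$ of $\Gmm{N-1,ts}/(N-2)!$ appearing in \eqref{eq:TnIntegral}, while the factor $e^{ts}$ extracted from each sum combines with the Gaussian inside the integral to produce the $+ts$ contribution in the final exponent.

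The calculation is essentially bookkeeping; the only mildly delicate point is tracking the correct upper limits $N$, $N-1$, $N-2$ of the incomplete gammas after the two successive index shifts, and ensuring that the combinatorial rearrangements $k(k+1)=2k+k(k-1)$ and $k^{2}=k+k(k-1)$ produce sums that can be reduced in closed form by a single application of the identity above, with no residual boundary terms.
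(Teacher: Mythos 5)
Your proof is correct and is exactly the ``analogous procedure'' the paper alludes to after deriving \eqref{eq:PnIntegral}: you use the same integral representation of the Hermite polynomials, correctly track the sign factor $(+i)^{k-1}(-i)^{k+1}=-1$, and your combinatorial splittings $k(k+1)=2k+k(k-1)$ and $k^2=k+k(k-1)$ together with the finite-sum identity $\sum_{k=0}^{N}x^{k}/k!=e^{x}\Gmm{N+1,x}/N!$ reproduce the coefficients $t^2+2ts$ and $t^2s(t+s)$ with the correct upper indices of the incomplete gammas.
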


\subsubsection{Bulk scaling}
Let us give the proof of {\bf Corollary \ref{th:BulkScaling}}.
\begin{proof}
After rescaling $z\to z\sqrt{N}$, $t\to t\sqrt{N}$ and $s\to s\sqrt{N}$, and then changing the integration variables
$(t,s)\to(p,q)$ as $(t+s)/\sqrt{2}=p$ and $(t-s)/\sqrt{2}=q$ the equation
\eqref{eq:PnIntegral} takes the following form:
\begin{equation}
P_N(z\sqrt{N})=\frac{ N! N}{2\pi \tau}e^{N\frac{z^2}{\tau}}\int_{\mathbb{R}}dp\,e^{-N\frac{p^2}{2}\left(\frac{1}{\tau}-1\right)}
\int_{\mathbb{R}}dq e^{-N\left(\frac{q^2}{2}\left(\frac{1}{\tau}+1\right)+\frac{iz\sqrt{2}}{\tau}q\right)} \label{asyintrep1}
\end{equation}
\[
\times \left( \theta_{N}\left(\frac{p^2-q^2}{2}\right)+Np^2\theta_{N-1}\left(\frac{p^2-q^2}{2}\frac{N}{N-1}\right)\right),
\]
where we denoted $\theta_{N}(x)=\frac{\Gmm{N+1,Nx}}{\Gmm{N+1}}$. %In the limit $N\to \infty$ this function tends to 1 for $x<1$ and 0 otherwise,  restricting the region of integration to $ts<1$. Nevertheless, for large $N$ the integral can be performed by saddle point method. There is a single saddle point at $t=-s=\frac{iz}{1+\tau}$, yielding
Note that for any $N$ this function is bounded: $\theta_{N}(x)\le 1$, and in the limit $N\to \infty$ for a fixed real $x$ we have $\theta_{N}(x)\to \theta_{\infty}(x)$, where
$ \theta_{\infty}(x)=1$ for $x<1$ and $0$ otherwise.

For $N\gg 1$ the integral over $p$ can be most straightforwardly evaluated by the Laplace method, yielding that the leading contribution to  $P_N(z\sqrt{N})$ can be written as
  \begin{equation}
P_N(z\sqrt{N})\sim \frac{ N! \sqrt{N}}{\sqrt{2\pi \tau(1-\tau)}}\,e^{N\frac{z^2}{\tau}}
\int_{\mathbb{R}}dq e^{-N\left(\frac{q^2}{2}\left(\frac{1}{\tau}+1\right)+\frac{iz\sqrt{2}}{\tau}q\right)}
\left( \theta_{N}\left(\frac{-q^2}{2}\right)+\frac{\tau}{1-\tau}\theta_{N-1}\left(\frac{-q^2}{2}\right)\right). \label{asyintrep1a}
\end{equation}
 For large $N$ the $q$-integral above  can be performed (for a fixed, $N-$independent real value of $z$)  by the standard saddle point method, with the saddle point position given by $q=q_*:=-\frac{iz\sqrt{2}}{1+\tau}$ yielding the required asymptotic formula:
 \begin{equation}
P_{N}(z\sqrt{N})\sim \frac{N!}{\sqrt{1-\tau^2}(1-\tau)} e^{\frac{Nz^2}{1+\tau}} \theta_{\infty}\left(z^2/(1+\tau)^2\right). \label{eq:PnBulkAsympt}
\end{equation}
The same type of reasoning applied to~\eqref{eq:TnIntegral} gives
\begin{equation}
T_{N}(z\sqrt{N}) \sim \frac{N!}{(1-\tau^2)^{3/2}}\frac{Nz^2}{1+\tau} e^{\frac{Nz^2}{1+\tau}}\theta_{\infty}\left(z^2/(1+\tau)^2\right) .
\end{equation}
Finally, the asymptotics
\begin{equation}
R_{N}(z\sqrt{N})\sim \frac{N!z\sqrt{N}}{(1-\tau^2)^{3/2}} e^{\frac{Nz^2}{1+\tau}}\theta_{\infty}\left(z^2/(1+\tau)^2\right)
\end{equation}
is obtained from \eqref{eq:PnBulkAsympt} using the fact that $R_{N}(z)=\frac{1}{2(N+1)}\frac{dP_{N+1}(z)}{dz}$.

Upon inserting this asymptotics into \eqref{eq:MainResult} and rescaling $q\to Nq$ it is clear that only the second to last term in the square bracket provides the leading order contribution, which happens to be
\begin{equation}
\frac{(1+\tau)(1-\tau^2)(N-1)[(N-1)P_{N-2}-T_{N-2}]}{1+\tau+Nq}\sim \frac{(N-1)![(1+\tau)^2-z^2]}{q\sqrt{1-\tau^2}} e^{\frac{Nz^2}{1+\tau}}. \label{eq:AuxAympt}
\end{equation}
As a consequence, the joint pdf reads
\begin{equation}
N \mathcal{P}(z\sqrt{N},Nq)=\frac{\sqrt{\frac{1+\tau}{1-\tau}}(1-\frac{z^2}{(1+\tau)^2})}{2\sqrt{2\pi}q^2}e^{-\frac{1+\tau}{2q}\left(1-\frac{z^2}{(1+\tau)^2}\right)}.
\end{equation}
Changing variables to $t=\frac{q}{1-\tau}$, one immediately recovers Corollary \ref{th:BulkScaling}.
\end{proof}

\subsubsection{Edge scaling}
When $z$ is tuned to values parametrically close to $z=\pm (1+\tau)$ where the step-function argument in
equations (\ref{eq:PnBulkAsympt})-(\ref{eq:AuxAympt}) is close to unity with a distance $O(N^{-1/2})$, the correponding asymptotics need to be evaluated
with higher accuracy. Such a regime is known as the {\it edge scaling}, which features in {\bf Corollary \ref{th:edge}}
which we now prove.

\begin{proof}
In the proof we choose the vicinity of $z=1+\tau$. Correspondingly, in (\ref{asyintrep1}) we now scale $z=1+\tau+\frac{w}{\sqrt{N}}$, where $w$ is of order 1.  The transition from (\ref{asyintrep1}) to (\ref{asyintrep1a}) remains the same as before.
Now we use the integral representation of the incomplete gamma function $\Gmm{N,x}=x^N\int_1^{\infty} u^{N-1} e^{-ux}du$, which
helps to rewrite the integral (\ref{asyintrep1a}) in the form
\begin{multline}
P_N(z)\sim \frac{N! N^{N+1}}{4\pi \tau \Gmm{N}} e^{N\frac{z^2}{\tau}} \int_{\mathbb{R}} dq\, e^{-\frac{N(1+\tau)}{2\tau} (q^2+i2\sqrt{2}q)-\frac{iqw\sqrt{2N}}{\tau}} \int_1^{\infty}du\, e^{-\frac{Nu}{2}(-q^2)+N\ln\left[\frac{u}{2}(-q^2)\right]}\times 
\\
\left(\frac{-q^2}{2}+\frac{\tau}{1-\tau}\frac{1}{u}\right).\label{asyintrep2}
\end{multline}

An inspection shows that whereas the $q-$integration is dominated by the contribution from the saddle point  $q=-\sqrt{2}i$, the last $u$-integral is dominated by the vicinity of $u=1$ of the width $O(N^{-1/2})$. Parameterizing in such a vicinity
 $u=1+\frac{v}{\sqrt{N}}$ one then arrives at the leading-order asymptotics
\begin{equation}
P_N\sim \frac{N! N^{N-1/2}}{(1-\tau)^2\Gmm{N}} e^{\frac{w^2}{1+\tau}+2w\sqrt{N}+N\tau} \int_0^{\infty} e^{-\frac{1+\tau}{2(1-\tau)}\left(v+\frac{2w}{1+\tau}\right)^2}dv.
\end{equation}
After the change of variables $u=\sqrt{\frac{1+\tau}{1-\tau}}(v+\frac{2w}{1+\tau})$ and the use of Stirling's approximation $\Gmm{N+1}\sim \sqrt{2\pi}N^{N+\frac{1}{2}} e^{-N}$, we obtain
\begin{equation}
P_{N}=\frac{N! e^{\frac{z^2}{1+\tau}}}{(1-\tau)\sqrt{2\pi(1-\tau^2)}} \int\limits_{\frac{2w}{\sqrt{1-\tau^2}}}^{\infty} e^{-\frac{u^2}{2}}du.
\end{equation}
The last integral is related to the complementary error function $\mathrm{erfc}(x)=\frac{2}{\sqrt{\pi}}\int_x^{\infty} e^{-t^2}dt$. Using $R_N(z)=\frac{1}{2(N+1)}\frac{dP_{N+1}(z)}{dz}$ we obtain that in such a regime asymptotically $R_{N}\sim \sqrt{N}P_{N}$.
From the asymptotics of \eqref{eq:AuxAympt} one expects that the leading order contributions from $(N+1)P_N$ and $T_N$ cancel. Therefore, one needs to work with the appropriate integral representation. Combining \eqref{eq:PnIntegral} and \eqref{eq:TnIntegral}  and following the analogous reasoning as above we obtain
\begin{equation}
(N+1)P_N-T_N=\frac{N!\sqrt{N}}{\sqrt{2\pi}(1-\tau^2)} e^{\frac{z^2}{1+\tau}}\left[e^{-\frac{2w^2}{1-\tau^2}}-\frac{2w}{\sqrt{1-\tau^2}} \int_{\frac{2w}{\sqrt{1-\tau^2}}}^{\infty} e^{-\frac{u^2}{2}}du\right],
\end{equation}
which is of the same order as $\sqrt{N}P_N$, as expected. To get the correct asymptotics at the edge, we rescale $q\to q\sqrt{N}$ in \eqref{eq:MainResult}.  It is now clear that the first term in the square bracket in \eqref{eq:MainResult} is subleading and the  contribution of other terms is of the same order. The asymptotics of $\left(\frac{q\sqrt{N}}{1+\tau+q\sqrt{N}}\right)^{\frac{N}{2}-1}$ is calculated as
\begin{equation}
e^{-\frac{N}{2}\ln\left(1+\frac{1+\tau}{q\sqrt{N}}\right)}\sim e^{-\frac{(1+\tau)\sqrt{N}}{2q}+\frac{(1+\tau)^2}{4q^2}},
\end{equation}
and we obtain

\begin{equation}
\mathcal{P}\sim \frac{1}{4\pi q^2 \sqrt{N}}e^{\frac{w}{q}-\frac{1-\tau^2}{4q^2}}\left[ e^{-\frac{2w^2}{1-\tau^2}}+\left(\frac{\sqrt{1-\tau^2}}{q}-\frac{2w}{\sqrt{1-\tau^2}}\right)\int_{\frac{2w}{1-\tau^2}} e^{-\frac{u^2}{2}} du\right].
\end{equation}
After denoting $w=\delta_{\tau}\sqrt{1-\tau^2}$ and $q=\sigma\sqrt{1-\tau^2}$, the statement of Corollary \ref{th:edge} follows.

\end{proof}

\subsubsection{Weak non-Hermiticity}
Our final goal is to provide the proof of  {\bf Corollary \ref{th:weaknonh}}.

\begin{proof}
We start again from (\ref{asyintrep1})  keeping $z$ fixed and $N-$independent like before in the bulk case,
but for the weak non-Hermiticity regime we replace $\tau=1-\frac{a^2}{2N}$.  
It is then immediately obvious that the $p$-integral is no longer 
dominated by the small vicinity $p\sim N^{-1/2}$, but rather by the integration range of order unity. 
Then a quick inspection shows that for extracting the leading asymptotics 
 in the large $N$ limit one can effectively replace (\ref{asyintrep1}) by 
\begin{equation}
P_N(z\sqrt{N})\sim\frac{ N! N^2}{2\pi \tau}e^{N\frac{z^2}{\tau}}\int_{\mathbb{R}}dp\,p^2 e^{-\frac{p^2a^2}{4}}
\int_{\mathbb{R}}dq e^{-N\left(q^2+iz\sqrt{2}q\right)} \theta_{\infty}\left(\frac{p^2-q^2}{2}\right).
%\label{asyintrep1} 
\end{equation}
Performing the integral over $q$ by the saddle point method, we see that the range of integration over $p$ is given by 
$|p|<\frac{\sqrt{4-z^2}}{\sqrt{2}}$. 
After a few simple changes of variables and straightforward manipulations one arrives at
\begin{equation}
P_N(z\sqrt{N})\sim \frac{N! N^{3/2}(4-z^2)^{3/2}}{2\sqrt{2\pi}} e^{\frac{Nz^2}{2}} \int_0^{1} e^{-\frac{s^2a^2}{2}\left(1-\frac{z^2}{4}\right)} s^2 ds. \label{eq:PnWeak}
\end{equation}
The asymptotic behavior of $R_N$ simply follows from the relation $R_{N}(z)=\frac{1}{2(N+1)}\frac{d P_{N+1}(z)}{dz}$ and is related to the asymptotics of $P_N$ as $R_N=\frac{z\sqrt{N}}{2}P_N$.  Asymptotics of $T_N$ analogously follows from its integral representation and  reads
\begin{align}
T_{N}(z\sqrt{N})=\frac{N!N^{5/2}(4-z^2)^{3/2}}{2\sqrt{2\pi}} e^{\frac{Nz^2}{2}}\int_0^1 e^{-\frac{s^2a^2}{2}\left(1-\frac{z^2}{4}\right)} \frac{4s^4-z^2s^4+z^2s^2}{4}ds. \label{eq:TnWeak}
\end{align}
Note that in \eqref{eq:MainResult} we used the rescaled quantity $q=(1-\tau)t$. Therefore, for the correct asymptotics, we need to rescale $q\to \frac{2N}{a^2}t$. This shows that all terms in the square bracket are of the same order. Direct use of the asymptotic forms \eqref{eq:PnWeak} and \eqref{eq:TnWeak} leads to \eqref{eq:PWeakNonh}.
\end{proof}

%For the correct asymptotics of the ratio of determinants, one needs to rescale $Q_1\to NQ_1$ in \eqref{eq:Dn}. A direct check shows that upon this rescaling all terms in \eqref{eq:Dn} are of the same order, leading to
%\begin{multline}
%D_N=\frac{2^{\frac{N}{2}}N^{\frac{1}{2}}\Gmm{\frac{N+1}{2}}(4-z^2)^{3/2}}{4\pi}\int_0^{\infty}\frac{dQ}{Q^2} e^{-pQ\frac{a^2}{2}} e^{\frac{Nz^2}{4}\left(2-\frac{NQ}{1+NQ}\right)}\int_0^1  e^{-\frac{a^2s^2}{2}(1-\frac{z^2}{4})}  \times \\
%\left[s^2+s^2\left(1-\frac{z^2}{4}\right) \left(a^2(1-s^2)+\frac{2}{Q}\right)\right] ds
%\end{multline}
%After further rescaling $Q\to \frac{2Q}{a^2}$ we obtain the Laplace transform
%\begin{equation}
%\cL(z,p)=\frac{N^\frac{1}{2} A\sqrt{4-z^2}}{4\pi} \int_0^{\infty} \frac{dQ}{Q^2} e^{-pQ} e^{-\frac{A}{2Q}} \int_0^1 e^{-\frac{1}{2}As^2} \left[1+A\left(1-s^2+\frac{1}{Q}\right)\right]s^2ds.
%\end{equation}
%This form is already a Laplace transform in variable $Q$, from which we immediately recover \eqref{eq:PWeakNonh}.

\begin{appendices}

\section{Density of real eigenvalues for moderate matrix size} \label{sec:AppDens}

For $N=2$ the joint pdf~\eqref{eq:MainResult} reads
\begin{equation}
{\cal P}_{N=2}(z,q)=\frac{1}{2\sqrt{2\pi}(1+\tau)} \frac{e^{-\frac{z^2}{2(1+\tau)}(1+\frac{q}{1+q})}}{\sqrt{q(1+q)}} \left(\frac{z^2}{(1+q)^2}+\frac{1+\tau}{1+q}\right).
\end{equation}
The substitution $t^2=\frac{q}{q+1}$ allows one to calculate the integral. After integration by parts, we obtain
\begin{equation}
\int_{0}^{\infty} {\cal P}_{N=2}(z,q)dq=\frac{e^{-\frac{z^2}{1+\tau}}}{\sqrt{2\pi}}+\frac{e^{-\frac{z^2}{2(1+\tau)}}}{\sqrt{2\pi}}\frac{z}{1+\tau}\int_0^z e^{-\frac{u^2}{2(1+\tau)}}du,
\end{equation}
which agrees with \eqref{eq:RealDens1}-\eqref{eq:RealDens2} when we substitute $N=2$. This way, with the help of \textit{Mathematica} software, we were also able to perform integration for $N=3,4$. For $N=4$ we again see agreement with the Forrester-Nagao result~\cite{ForNag2008}, while for $N=3$ we compared the results of integration with the numerical diagonalizations of random matrices, see Fig.~\ref{Fig:2}. For moderate matrix sizes, where the symbolic calculations were not possible, we numerically integrated~\eqref{eq:MainResult} and compared with numerical diagonalization, observing good agreement, see Fig.~\ref{Fig:2}.

\begin{figure}
\includegraphics[width=0.5\textwidth]{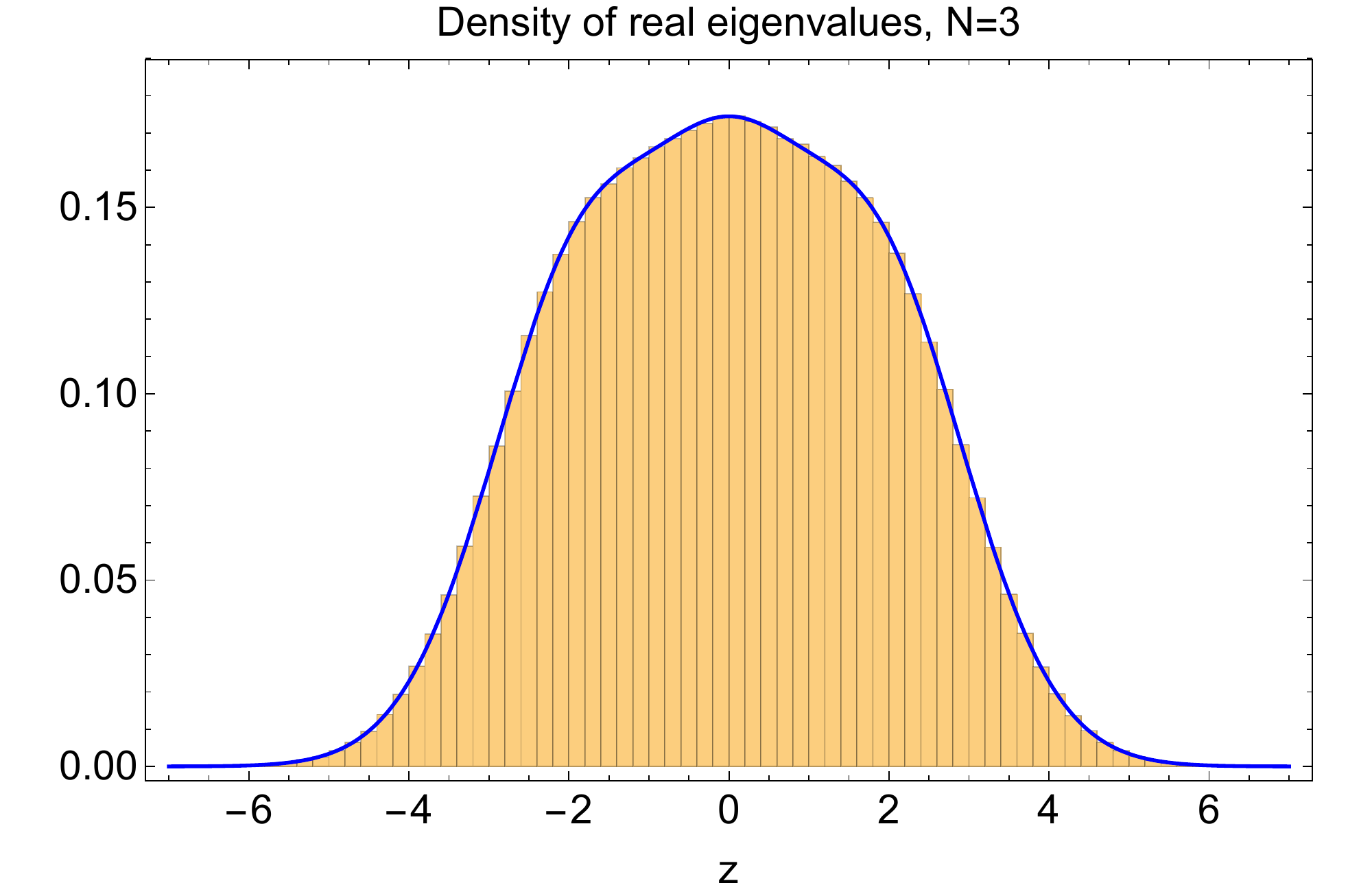}\includegraphics[width=0.5\textwidth]{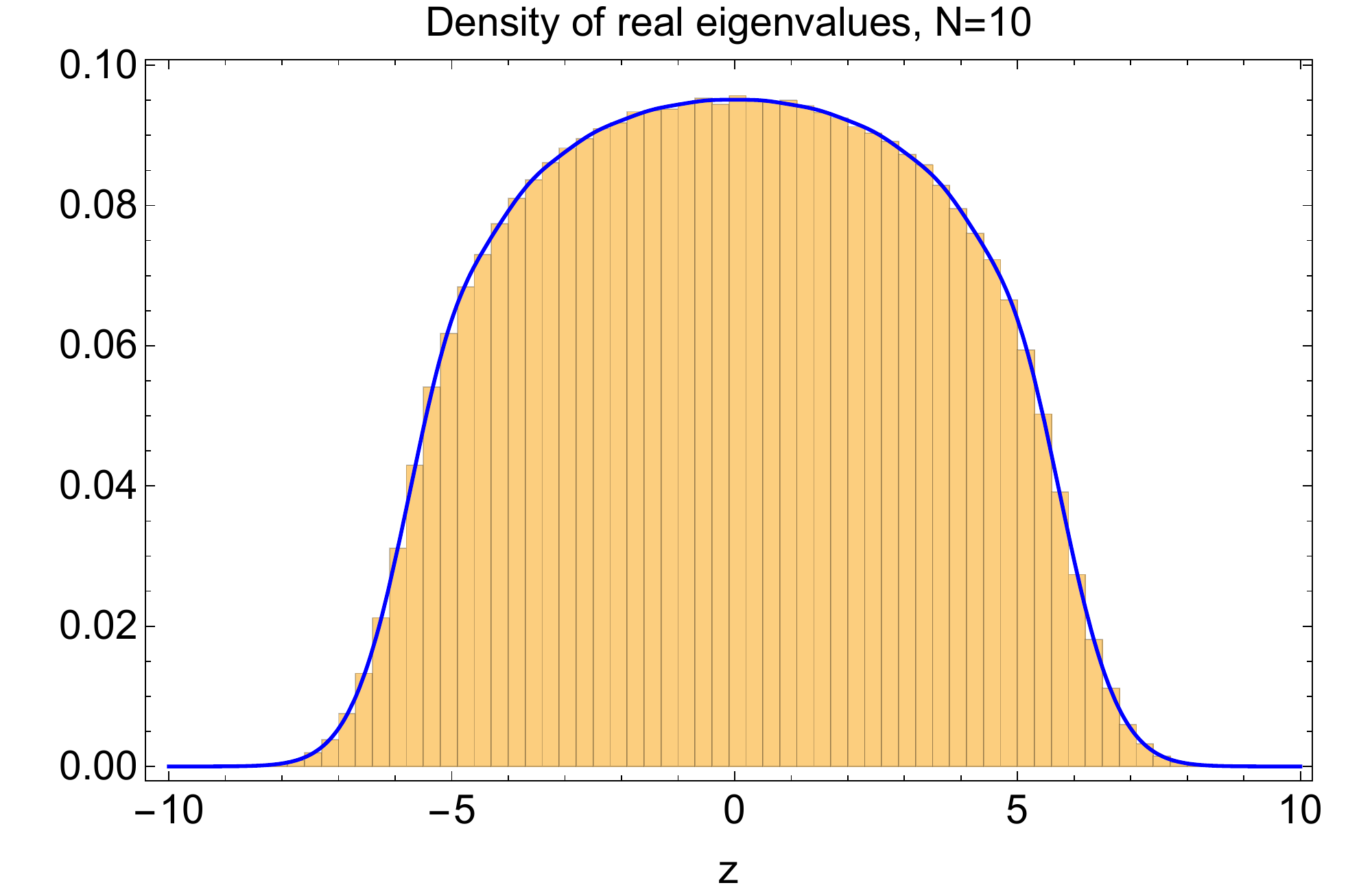}
\caption{Histograms of the density of real eigenvalues for the Real Elliptic ensemble with $\tau=0.9$ obtained by the direct diagonalization of  $10^6$ matrices of size $N=3$ (left) and  $2\cdot 10^5$ matrices of size $N=10$ (right). Blue solid lines represent the formula obtained by analytical (left) and numerical (right) integration of ${\cal P}(z,q)$~\eqref{eq:MainResult}. Formulas are rescaled so that the density is normalized to 1.
\label{Fig:2}
}
\end{figure}

\section{Proof of identity \eqref{eq:Identity}} \label{sec:App1}
%Evaluation of the pffafian
%\begin{lemma}
%Let $M$ be defined as in \eqref{eq:Mdef} it pfaffian is equal to \eqref{eq:pfaffianM}.
%\end{lemma}
%\begin{proof}
%Matrix $A$ has rank two and therefore can be brought to the form
%\begin{equation}
%\begin{pmatrix}
%\lambda_1 & t \\
%0 & \lambda_2
%\end{pmatrix}\oplus 0_{N-2}
%\end{equation}
%\end{proof}
%via unitary similarity transformation which does not change. Here

We shall prove \eqref{eq:Identity} by induction.
\begin{proof}
The first step is trivial as this identity can be verified by substituting Hermite polynomials for low $N$. Let us assume that \eqref{eq:Identity} holds for $N-1$. Using formulas \eqref{eq:PNdef}-\eqref{eq:TNdef} it is easy to find the recurrence relations
\begin{align}
P_N&=NP_{N-1}+A_N, \\
R_N&=NR_{N-1}+B_N, \\
T_{N}&=NT_{N-1}+NA_N,
\end{align}
with
\begin{align}
A_N&= \tau^N[ (N+1)\He_N^2-N\He_{N+1}\He_{N-1}],  \label{eq:An}   \\
2B_{N} & = \tau^{N-1/2}[(N+1)\He_N\He_{N-1}-(N-1)\He_{N+1}\He_{N-2}, \label{eq:Bn}
\end{align}
where, for simplicity, we omitted the argument $\frac{z}{\sqrt{\tau}}$ of Hermite polynomials. These recursions allow us to rewrite the left-hand side of \eqref{eq:Identity} as
\begin{multline}
(N-1)[P_{N-1}-P_{N-2}(1+\tau-z^2)-(N-2)(2\tau^2+N-2)P_{N-3}-2z R_{N-2}+(1-\tau^2)(N-2)T_{N-3}]+ \\
A_{N}+(z^2-\tau)A_{N-1}-N(N-1)\tau^2A_{N-2}-2z B_{N-1}.
\end{multline}
The the expression in square brackets is zero by the induction assumption. Verification that the second line equals 0 relies on the substitution of \eqref{eq:An} and \eqref{eq:Bn} and the consecutive use of the three term recursion $\He_{N+1}(x)=x\He_{N}(x)-N\He_{N-1}(x)$.
\end{proof}

\end{appendices}

%
%TODO: \\
%2.check if $D_N$ for $\tau=1$ tends to the absolute value of the determinant for GOE \\
%3. Integrate out the eigenvector part to get the density of real eigenvalues\\
%{\color{red} [WT]: points 2 and 3 are basically equivalent, it is however very nontrivial to calculate such an integral...} \\
%4.compare results with the paper Fyodorov Nock. \\
%{\color{red} [WT]: Here it is not a very well defined comparison, as the objects considered in the paper \url{https://link.springer.com/content/pdf/10.1007%2Fs10955-015-1209-x.pdf}
%which are interesting from our perspective, namely $C_{2,2}$ are calculated in the "bulk", i.e. the limit $N\to\infty$ is taken. The interesting formula for us is (41) from that paper, but the corresponding integrals seem not to be doable. }

\end{document}